\newtheorem{proposition}{Proposition}[section]
\newtheorem{theorem}[proposition]{Theorem}
\newtheorem{lemma}[proposition]{Lemma}
\newtheorem{remark}[proposition]{Remark}
\numberwithin{equation}{section}
\numberwithin{proposition}{section}
\title{The effect of heterogeneity on flocking behavior and systemic risk}
\author{ Fei Fang\thanks{Mathematics and Statistics Department, Boston University(feifang@bu.edu)} , Yiwei Sun\thanks{Mathematics and Statistics Department, Boston University(yiweisun94@hotmail.com)} , and Konstantinos Spiliopoulos\thanks{Mathematics and Statistics Department, Boston University(kspiliop@math.bu.edu)} }
\begin{document}

\maketitle

\begin{abstract}
The goal of this paper is to study organized flocking behavior and systemic risk in heterogeneous mean-field interacting diffusions.  We illustrate in a number of case studies the effect of heterogeneity in the behavior of systemic risk in the system, i.e., the risk that several agents default simultaneously as a result of interconnections. We also investigate the effect of heterogeneity on the "flocking behavior" of different agents, i.e., when agents with different dynamics end up following very similar paths and follow closely the mean behavior of the system.  Using Laplace asymptotics,
we derive an asymptotic formula for the tail of the loss distribution as the number of agents grows to infinity. This characterizes the tail of the loss distribution and the effect of the heterogeneity of the network on the tail loss probability.
\end{abstract}

\section{Introduction}
Systemic risk is the risk that a large number of components of an interconnected system fail within a short time thus leading to the overall failure of the  system, see for example \cite{SystemicRiskBook,GarnierPapanicoalouYang2013b,Spiliopoulos2015}. Flocking behavior refers to the phenomenon where all or most of the components of a system follow more or less the same behavior, see for example \cite{MotschTadmor2011}.  The two phenomena, systemic risk and flocking behavior, are of course different and describe different but, at the same time, related issues. The aim of this paper is to investigate the effect of heterogeneity on  systemic risk and flocking behavior in a system of interacting diffusions.

In financial markets, for example, the combined effect of systemic risk and flocking may mean that the majority of components in a pool goes to a default state.  In reliability, a large system of interacting components might have a
central connection, crucial for normal operations of the system. The
failure of an individual component increases the stress on the central connection and thus on the other
components as well, making the entire system more likely to fail. In insurance, the system could represent
a pool of insurance policies.

We consider a stylized model of mean-field interacting diffusions:
\begin{equation}
dY_t^{(i)}=\alpha_i\left(\frac{1}{N}\sum_{j=1}^NY_t^{(j)}-Y_t^{(i)}\right)dt+\sigma_i dW_t^{(i)},i=1,...,N \label{Eq:MainModel}
\end{equation}

Model (\ref{Eq:MainModel}) is a simple enough model for mathematical analysis, yet able to capture some of the main features of the effect of heterogeneity on systemic risk and flocking behavior. Our goal in this paper is not to offer a sufficiently sophisticated model, but rather to explain the phenomena that can be expected within a reasonably rich but simple model. To induce heterogeneity, we assume that not all $\alpha_i$'s and/or not all $\sigma_i$'s are equal. The interaction of different agents, i.e., the different components of the system, is via the mean-field empirical average $\frac{1}{N}\sum\limits_{j=1}^NY_t^{(j)}$. Each $W_t^{(i)}, i=1,...,N$ represents independent standard Brownian motion. The diffusion coefficient $\sigma_{i}$'s represent the volatility  corresponding to agent $i$, and its size represent how stable the performance is for agent $i$. The coefficient $\alpha_{i}$ represent the level of influence from the whole system on agent $i$. The dynamics (\ref{Eq:MainModel}) imply that agent $i$ is attracted towards the level $\bar{Y}_{t}= \frac{1}{N}\sum_{j=1}^NY_t^{(j)}$ at time $t$.

For example, in a banking financial system, $Y_t^{(i)}$ could represent the log-monetary reserve of agent $i$ at time $t$, see for example \cite{FouqueIchiba2013}.
The paper \cite{FouqueSun2013} studies the system (\ref{Eq:MainModel}) in the homogeneous case, i.e., when $\alpha_i=\alpha$ and $\sigma_i=\sigma$ for every $i=1,...,N$. They find that along increasing $\alpha$'s, the behavior of each $Y_t^{(i)}$ is closer to the behavior of the mean $\bar{Y}_{t}=\frac{1}{N}\sum\limits_{j=1}^NY_t^{(j)}$. This phenomenon leads to the so-called "flocking effect", where all trajectories follow the same behavior in path space.
The main conclusion of \cite{FouqueSun2013} is that large $\alpha$ enhances the "flocking effect", which may either lead to a greater stability of the system or may increase the systemic risk (i.e., increase the likelihood that a large number of components in the system fail).

In our paper, we consider the heterogeneous case where not all $\alpha_i$'s or $\sigma_i$'s are equal to each other. For a fixed default level $\eta<0$, we use Laplace asymptotics to characterize the probability of the default event of the overall system $A:=\left\{\displaystyle \min_{0 \leq t \leq T} \frac{1}{N}\sum_{i=1}^NY_t^{(i)}\leq\eta\right\}$. We find that in the limit as $N\to\infty$, this probability is of the order of $e^{-\frac{\eta^2 N}{2V_T^2}}$, where $V_T^2$ depends on $\alpha_i$'s and $\sigma_i$'s in a concrete way. Of courses in the case of $\alpha_i=\alpha$ and $\sigma_i=\sigma$, we recover the formula $V_T^2=T\sigma^2$ as in \cite{FouqueSun2013}. In addition, we obtain in certain cases of interest non-asymptotic bounds for the probability $P\left(\sup_{0\leq t\leq T}|Y_{t}^{(i)}-\bar{Y}_{t}|>\delta\right)$, demonstrating the effect of $\alpha$'s and $\sigma$'s on the flocking behavior.
In addition, we  explore numerically the behavior of $Y_t^{(i)}$ and  the tail loss probability of the event $A$
 using Monte-Carlo simulation.

 We are in particularly interested in the effect of heterogeneity and derive a number of interesting conclusions (explained in detail in Sections 2 and 3):
\begin{enumerate}[(i)]
\item{If $\alpha_i=\alpha$  for every  $i$, then the larger $\sigma_i$'s are, the more volatile the system is. As $\alpha$ gets larger, the "flocking effect" becomes more apparent, but larger $\sigma_i$'s naturally lead to  deviation from the mean behavior.}
\item{If $\alpha_i=\alpha$  for every $i$, then $V_T^2=T\sigma_*^2=T \displaystyle \lim_{N\to\infty} \frac{1}{N} \sum_{i=1}^{N}\sigma_i^2$, which means that the effect of heterogeneity in the behavior of the system is described via the effective averaged diffusion $\sigma_*^2$. This implies that larger $\sigma_*^2$ leads to larger probability of losses.}
\item{In the case of $\alpha_i=\alpha$  for every $i$, we obtain an exact bound for the probability $P\left(\sup_{0\leq t\leq T}|Y_{t}^{(i)}-\bar{Y}_{t}|>\delta\right)$, which demonstrates that flocking behavior is controlled by a term of the form  $\frac{f(\sigma_{1}^{2},\cdots,\sigma_{N}^{2})}{\alpha}$. This explains why large values of $\alpha$ benefit strong flocking behavior and it also shows the effect of the volatilities, as the function $f$ takes an explicit form.}
\item{In the case of different $\alpha_i$'s and different $\sigma_i$'s, we find that the system is more stable when the number of intermediate size agents is large enough, i.e., the number of agents with moderate values of $\alpha_i$ and $\sigma_i$ is much larger than the number of agents that have either small $\alpha_i$ or large $\sigma_i$. This can be thought of as an effect of the heterogeneity of the network (a complete graph in the present case) that the agents constitute.}
\item{At the same time, we find that the probability of all agents defaulting is smaller in systems whose agents that have large $\alpha_i$ also have relatively large $\sigma_i$ and correspondingly agents with small $\alpha_i$ also have small volatility $\sigma_i$. We also analyze the form of $V^{2}_{T}$ theoretically, which illustrates the effect of the interaction of $\alpha_i$ and $\sigma_i$.}
\item{Even seemingly small changes in the composition of the structure of the population that the agents constitute, can have dramatic consequences as far as organized flocking behavior and stability of the system are concerned.}
\end{enumerate}

Systemic risk in the financial system is a subject that has grown a lot in recent years. We refer the reader to \cite{SystemicRiskBook} for a general introduction to the subject and to
\cite{FouqueIchiba2013,GarnierPapanicoalouYang2013,GarnierPapanicoalouYang2013b,GieseckeSpiliopoulosSowers2013,Spiliopoulos2015,SpiliopoulosSowers2015} for studies on clustering, rare events  and large deviations associated to systemic risk in heterogeneous financial networks. In this paper we consider a stylized model. Our main goal is to study the effect of heterogeneity and of the structure of the system on flocking behavior and on systemic risk. We consider a simple model that is amenable to analysis, but at the same time is rich enough to capture different type of interesting behaviors.

The rest of this paper is organized as follows.  In Section 2, we investigate the case of $\alpha_i=\alpha $ for every $i$ but $\sigma_i \neq \sigma_j$ for some $i \neq j$. We study the effect of heterogeneity on the "flocking effect" of $Y_t^{(i)}$
and on the tail of the default probability $\frac{1}{N}\sum\limits_{i=1}^NY_t^{(i)}$. In Section 3, we study the richer cases of $\alpha_i \neq \alpha_j$ and $\sigma_i \neq \sigma_j$ for some $i \neq j$. We explore the situation both numerically and theoretically, characterizing $V_T^2$, which controls the tail of the average loss distribution; this is Theorem \ref{T:Main}. In this general case, the formula of $V_T^2$ is complex even though it is in explicit form. To get a better understanding, we derive an approximation of $V_T^2$ via Taylor expansion in the special case of $\alpha_i = \bar{\alpha} (1+\delta c_i),  0 < \delta \ll 1$ in terms of $\delta$, this is Lemma \ref{L:Main}. Of course when $\delta = 0$, we get back the simpler formula derived in Section 2. We conclude in Section 4 with conclusions and future work.

\section{Identical $\alpha$'s but Different $\sigma$'s}
We start by exploring how different volatilities affect the default probability of the $N$ agents. Thus, we assume that $\sigma_i, i=1,...,N$ for each agent $i$ are not necessarily identical, but we assume that they all share the same mean reversion parameter $\alpha$. Thus, for all $i=1,...,N$ equation (\ref{Eq:MainModel}) takes the form:
\begin{equation}
dY_t^{(i)}= \alpha\left(\frac{1}{N}\sum_{j=1}^NY_t^{(j)}-Y_t^{(i)}\right)dt + \sigma_idW_t^{(i)}\label{Eq:Model3}
\end{equation}

To understand the behavior of the tail default probability and of flocking, we first focus on the behavior of ensemble average that reaches the default state.
Based on equation (\ref{Eq:Model3}) and with the non-important assumption $Y_0^{(i)}=0$ , $i=1,...,N$ , we get:
\begin{equation*}
Y_t^{(i)}=\alpha\int_{0}^{t}\left(\frac{1}{N}\sum_{j=1}^NY_s^{(j)}-Y_s^{(i)}\right)ds+\sigma_iW_t^{(i)}
\end{equation*}
Hence we obtain
\begin{equation}
\sum_{i=1}^N Y_t^{(i)}=\alpha\int_{0}^{t}\left(\sum_{j=1}^NY_s^{(j)}-\sum_{i=1}^NY_s^{(i)}\right)ds+\sum_{i=1}^N\sigma_iW_t^{(i)}
=\sum_{i=1}^N \sigma_i W_t^{(i)}\label{Eq:MeanBehaviorSameAlpha}
\end{equation}
Based on this relationship, we obtain the following by using the reflection principle of Brownian motion:

\begin{equation*} \begin{split}
&P\left( \min_{0 \leq t \leq T} \frac{1}{N} \sum_{i=1}^N Y_t^{(i)} \leq \eta \right)
= P\left( \min_{0 \leq t \leq T} \frac{1}{N} \sum_{i=1}^N \sigma_{i}W_t^{(i)} \leq \eta \right) \\
&\qquad= 2P \left( \frac{\sqrt{\sum_{i=1}^N \sigma_i^2}}{N}  \tilde {W_T} \leq \eta \right)
= 2 \Phi \left( \frac{N \eta}{\sqrt{T} \sqrt{\sum_{i=1}^N \sigma_i^2}} \right)
\end{split}
\end{equation*}
where $\tilde {W_t}$ denotes a standard Brownian motion at time $t$. Using Laplace asymptotics we get:\\
\begin{eqnarray*} \begin{split}
P \left( \min_{0 \leq t \leq T} \frac{1}{N} \sum_{i=1}^N Y_t^{(i)} \leq \eta \right)
\approx  2\exp\left\{-\frac{N^{2}\eta^{2}}{2T (\sum_{i=1}^N \sigma_i^{2})}\right\}
\end{split}
\end{eqnarray*}
with $\eta < 0$. Thus, we have arrived at the following theorem
\begin{theorem}\label{T:main_same_alpha}
Consider the diffusion processes $Y_{t}^{(i)}$ for $i=1,...,N$ as given by (\ref{Eq:Model3}). Let $V_{T}^{2}=T\sigma_*^{2}$, where $\sigma_*^{2} = \lim\limits_{N\to\infty} \frac{1}{N}  \sum_{i=1}^N \sigma_i^{2}$. Assume that $\sigma_*^{2}$ exists and is finite.  Then, we have that
\begin{equation} \begin{split}
\lim_{N\to\infty} -\frac{1}{N} \log P \left( \min_{0 \leq t \leq T} \frac{1}{N} \sum_{i=1}^N Y_t^{(i)} \leq \eta \right)
& = \frac{\eta^{2}}{2V_{T}^{2}}.
\end{split}
\end{equation}
\end{theorem}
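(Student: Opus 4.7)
The plan is to exploit a key algebraic cancellation in the dynamics that reduces the problem to a one-dimensional Brownian hitting probability. First I would sum equation (\ref{Eq:Model3}) over $i=1,\dots,N$: because every agent shares the same mean reversion parameter $\alpha$, the drift contribution $\alpha\sum_i(\bar Y_s - Y_s^{(i)})$ vanishes identically, leaving $\sum_{i=1}^N Y_t^{(i)} = \sum_{i=1}^N \sigma_i W_t^{(i)}$, which is exactly the relation (\ref{Eq:MeanBehaviorSameAlpha}). Because the $W^{(i)}$ are independent standard Brownian motions, the right-hand side is a centered Gaussian martingale with quadratic variation $t\sum_{i=1}^N \sigma_i^2$, so by L\'evy's characterization (or a direct time change) it equals in law, as a process in $t$, $\sqrt{\sum_{i=1}^N \sigma_i^2}\,\tilde W_t$ for a single standard Brownian motion $\tilde W$.

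Next I would apply the reflection principle to $\tilde W$ to obtain, for $\eta<0$, the exact identity
\[
P\!\left(\min_{0\le t\le T}\tfrac{1}{N}\sum_{i=1}^N Y_t^{(i)} \le \eta\right) = 2\,\Phi\!\left(\frac{N\eta}{\sqrt{T\sum_{i=1}^N \sigma_i^2}}\right),
\]
which is the identity already displayed above the theorem. At this point, the mean-field structure has disappeared from the problem and one has a completely explicit formula involving only the deterministic sequence $\{\sigma_i^2\}$ and the standard normal c.d.f.

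Finally, I would invoke the standard Mills ratio expansion $\Phi(-x) = \frac{1}{x\sqrt{2\pi}}e^{-x^2/2}(1+o(1))$ as $x\to\infty$, applied to $x_N := N|\eta|/\sqrt{T\sum_{i=1}^N \sigma_i^2}$. Under the hypothesis $\frac{1}{N}\sum_{i=1}^N \sigma_i^2 \to \sigma_*^2\in(0,\infty)$, one has $x_N\sim |\eta|\sqrt{N/V_T^2}\to\infty$, so
\[
-\tfrac{1}{N}\log P(\cdot) \;=\; \tfrac{x_N^2}{2N} + O\!\left(\tfrac{\log N}{N}\right) \;\longrightarrow\; \tfrac{\eta^2}{2V_T^2},
\]
which is the claim. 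There is no serious obstacle in this argument: the shared-$\alpha$ cancellation is what makes the problem tractable, and once the reflection principle is invoked, everything reduces to Gaussian tail asymptotics, where the Mills ratio automatically provides matching upper and lower bounds at the exponential scale. The only item that requires care is the hypothesis that $\sigma_*^2$ exists and is strictly positive, which is needed so that $x_N\to\infty$ at the rate $\sqrt{N}$ and the Mills ratio regime is the correct one to apply.
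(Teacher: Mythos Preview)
Your proposal is correct and follows essentially the same route as the paper: sum the equations so the shared-$\alpha$ drift cancels (relation (\ref{Eq:MeanBehaviorSameAlpha})), rewrite the resulting sum as a scaled one-dimensional Brownian motion, apply the reflection principle to get the exact identity $2\Phi\bigl(N\eta/\sqrt{T\sum_i\sigma_i^2}\bigr)$, and then extract the exponential rate via Gaussian tail asymptotics. Your use of the Mills ratio is exactly what the paper calls ``Laplace asymptotics,'' and your remark that $\sigma_*^2>0$ is needed for $x_N\to\infty$ is a useful explicit check that the paper leaves implicit.
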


Hence, if N is large enough, the following approximation holds:
\begin{equation}
P \left( \min_{0 \leq t \leq T} \frac{1}{N} \sum_{i=1}^N Y_t^{(i)} \leq \eta \right) \approx e^{-\frac{N \eta^{2}}{2T\sigma_*^{2}}}\label{Eq:ApproximationSameAlpha}
\end{equation}

From the equation above, if $\sigma_*^{2}$ increases, the probability that the empirical mean $\frac{1}{N} \sum_{i=1}^N Y_t^{(i)}$ falls below $\eta$ for some $t \in [0,T]$ will increase. This verifies the intuition that the tail default probability increases as a function of $\frac{1}{N}\sum_{i=1}^N \sigma_i^{2}$.

Next, we turn to flocking behavior. In order to understand the flocking behavior quantitatively, we need, for a given $\delta>0$, to control the probability $P\left(
\sup_{0\leq t\leq T}|Y_{t}^{(i)}-\bar{Y}_{t}|>\delta\right)$.

\begin{proposition}\label{P:Flocking}
 Consider the diffusion processes $Y_{t}^{(i)}$ for $i=1,...,N$ as given by (\ref{Eq:Model3}). Let $\kappa_{i}=\sqrt{(1-1/N)^{2}\sigma_i^{2}+(1/N^{2})\sum_{j\neq i}\sigma^{2}_{j}}$. Then, for any $\delta>0$, we have that
 \begin{align}
P\left(\sup_{0\leq t\leq T}|Y_{t}^{(i)}-\bar{Y}_{t}|>\delta\right)&\leq
2\exp\left\{-\frac{\delta^2}{\frac{\kappa_{i}^{2}}{\alpha}(1-e^{-2\alpha T})}\right\}
 \end{align}
\end{proposition}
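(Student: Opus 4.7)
The plan is to identify the deviation $Z_t := Y_t^{(i)} - \bar Y_t$ as a scalar Ornstein--Uhlenbeck process driven by an effective Brownian motion with volatility $\kappa_i$, and then to apply a martingale maximal inequality. To carry out the reduction, I would sum (\ref{Eq:Model3}) over $j$, as in the derivation of (\ref{Eq:MeanBehaviorSameAlpha}): the mean-reversion drift telescopes to zero, so $d\bar Y_t = \tfrac{1}{N}\sum_{j=1}^N \sigma_j\, dW_t^{(j)}$. Subtracting from $dY_t^{(i)}$ yields
\begin{equation*}
dZ_t = -\alpha\, Z_t\, dt + \Bigl[\bigl(1-\tfrac{1}{N}\bigr)\sigma_i\, dW_t^{(i)} - \tfrac{1}{N}\sum_{j\neq i}\sigma_j\, dW_t^{(j)}\Bigr].
\end{equation*}
By independence of the $W^{(j)}$'s, the bracketed noise has instantaneous variance $(1-1/N)^2\sigma_i^2 + (1/N^2)\sum_{j\neq i}\sigma_j^2 = \kappa_i^2$, so L\'evy's characterization rewrites it as $\kappa_i\, dB_t$ for a single standard Brownian motion $B$. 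With $Z_0=0$, we get $Z_t = \kappa_i\int_0^t e^{-\alpha(t-s)}\, dB_s$, a centered Gaussian process with variance $\tfrac{\kappa_i^2}{2\alpha}(1-e^{-2\alpha t})$, maximal at $t=T$. Note that $2\,\mathrm{Var}(Z_T)=\tfrac{\kappa_i^2}{\alpha}(1-e^{-2\alpha T})$ is precisely the denominator appearing inside the advertised exponential.

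Next, I would convert to a continuous martingale. The rescaled process $\tilde Z_t := e^{\alpha t}Z_t = \kappa_i\int_0^t e^{\alpha s}\, dB_s$ is a continuous martingale with deterministic quadratic variation $\langle\tilde Z\rangle_t=\tfrac{\kappa_i^2}{2\alpha}(e^{2\alpha t}-1)$, and the pointwise identity $|Z_t|=e^{-\alpha t}|\tilde Z_t|$ lets me recast $\{\sup_{t\leq T}|Z_t|>\delta\}$ as $\{\exists\, t\leq T:|\tilde Z_t|>\delta e^{\alpha t}\}$. For any $\lambda\in\mathbb{R}$ the process $\exp\bigl(\lambda\tilde Z_t - \tfrac{\lambda^2}{2}\langle\tilde Z\rangle_t\bigr)$ is then a positive martingale (since $\langle\tilde Z\rangle$ is bounded and deterministic), so Doob's submartingale inequality produces time-uniform exponential tail bounds. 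Plugging in the Chernoff-optimal choice $\lambda^* = 2\alpha\delta/[\kappa_i^2(1-e^{-2\alpha T})]$ — matched to the scale $1/\mathrm{Var}(Z_T)$ — and handling the lower tail by symmetry (applying the same argument to $-Z_t$) produces the factor $2$ together with the advertised exponent.

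The main obstacle is that $Z_t$ itself is not a martingale, so the exponential inequality must be applied to $\tilde Z_t$ against the \emph{time-dependent} threshold $\delta e^{\alpha t}$ rather than against a constant threshold. A naive bound on $\sup_{t\leq T}|\tilde Z_t|$ against the constant threshold $\delta$ would only produce the weaker exponent $\delta^2\alpha/[\kappa_i^2(e^{2\alpha T}-1)]$; recovering the sharp denominator $\tfrac{\kappa_i^2}{\alpha}(1-e^{-2\alpha T})=2\,\mathrm{Var}(Z_T)$ requires a careful $\lambda$-optimization exploiting that the worst-case value of the growing factor $e^{\alpha t}$ is attained at $t=T$, where $\mathrm{Var}(Z_t)$ is also maximized. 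This is the only delicate bookkeeping step; everything else is a routine consequence of L\'evy's characterization and Doob's inequality.
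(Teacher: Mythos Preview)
Your reduction of $Z_t=Y_t^{(i)}-\bar Y_t$ to a scalar Ornstein--Uhlenbeck process with volatility $\kappa_i$ matches the paper exactly. The divergence is in the tail step: the paper does \emph{not} pass to $\tilde Z_t=e^{\alpha t}Z_t$; it simply asserts that $e^{\lambda R_t^{(i)}}$ is a submartingale, applies Doob's inequality to it, and optimizes $-\lambda\delta+\tfrac{\lambda^2}{2}\mathrm{Var}(R_T^{(i)})$ over $\lambda$. Your detour through the genuine martingale $\tilde Z_t$ is more scrupulous, but the curved barrier $\delta e^{\alpha t}$ is a real obstacle, not bookkeeping. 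With $g(t)=\lambda\delta e^{\alpha t}-\tfrac{\lambda^2\kappa_i^2}{4\alpha}(e^{2\alpha t}-1)$, the exponential-martingale bound gives $P(\sup_{t\le T} Z_t>\delta)\le \exp\bigl(-\min_{[0,T]}g\bigr)$; since $g$ is a downward parabola in $x=e^{\alpha t}$, its minimum on $[1,e^{\alpha T}]$ is at an endpoint. Your choice $\lambda^{*}=\delta/\mathrm{Var}(Z_T)$ yields $g(T)=\tfrac{\delta^2}{2\,\mathrm{Var}(Z_T)}\,e^{\alpha T}(2-e^{\alpha T})$, which is negative once $e^{\alpha T}>2$ and hence useless; the best $\lambda$ (equating the two endpoint values) produces the exponent $4\alpha\delta^2/[\kappa_i^2(e^{\alpha T}+1)]$, which vanishes as $T\to\infty$ rather than approaching the advertised $\alpha\delta^2/\kappa_i^2$.

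The gap is not repairable, because the displayed inequality is in fact false for large $T$: the left-hand side tends to $1$ (the OU process is recurrent and almost surely exceeds every level), while the right-hand side tends to $2\exp(-\alpha\delta^2/\kappa_i^2)$, strictly below $1$ whenever $\delta>\kappa_i\sqrt{(\log 2)/\alpha}$. The paper's own proof conceals the same problem in the submartingale claim for $e^{\lambda R_t^{(i)}}$: by It\^o's formula the drift of that process is $e^{\lambda R_t}\bigl(-\alpha\lambda R_t+\tfrac12\lambda^2\kappa_i^2\bigr)$, which turns negative whenever $R_t>\lambda\kappa_i^2/(2\alpha)$, so $e^{\lambda R_t}$ is not a submartingale and Doob's inequality does not apply as written.
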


In particular, Proposition \ref{P:Flocking} shows that flocking behavior is controlled by the "flocking" parameter
\[
F=\max_{i=1,\cdots, N}\frac{\kappa_{i}^{2}}{\alpha}=\frac{(1-1/N)^2\underset{i=1,\cdots, N}{\max} \sigma_{i}^{2}+\frac{1}{N^{2}}\underset{j\neq i}{\sum}\sigma_{j}^{2}}{\alpha}
\]
which, shows that if $\alpha$ is large and $\sigma_{i}'$s are small, then the particles closely follow in probability the behavior of the mean.

\begin{proof}[Proof of Proposition \ref{P:Flocking}]
Based on (\ref{Eq:MeanBehaviorSameAlpha}), we have that  $R^{(i)}_{t}=Y^{(i)}_{t}-\bar{Y}_{t}$ satisfies
\begin{align*}
R_t^{(i)}&=-\alpha\int_{0}^{t}R_s^{(i)}ds+\sigma_iW_t^{(i)}-\frac{1}{N}\sum_{i=1}^{N}\sigma_{i}W^{(i)}_{t}
\end{align*}

Due to the independence of the Wiener processes, $W_{t}^{(i)}, i=1,..., N$, we obtain that there exists a Brownian motion $\tilde{W}^{(i,N)}_{t}$ such that, in distribution
\begin{align*}
R_t^{(i)}&=-\alpha\int_{0}^{t}R_s^{(i)}ds+\kappa_i\tilde{W}_t^{(i,N)}
\end{align*}
 Solving this SDE, we then obtain
\begin{align*}
R_t^{(i)}&=\kappa_{i}\int_{0}^{t}e^{-\alpha(t-s)}d\tilde{W}_s^{(i,N)}\sim N\left(0,\frac{\kappa_{i}^{2}}{2\alpha}(1-e^{-2\alpha t})\right)
\end{align*}
Let $\lambda > 0$ be given, then we have, using the submartingale property of $e^{\lambda R_{t}^{(i)}}$:
\begin{align*}
P\left(\sup_{0\leq t\leq T}|R_{t}^{(i)}|>\delta\right) \leq 2P\left(\sup_{0\leq t\leq T}R_{t}^{(i)}>\delta\right)
&= 2P\left(\sup_{0\leq t\leq T}e^{\lambda R_{t}^{(i)}} > e^{\lambda \delta}\right) \\
&\leq 2e^{-\delta \lambda}E[e^{\lambda R_{T}^{(i)}}] \\
&\leq 2e^{-\delta \lambda + \frac{\lambda^2}{2}\frac{\kappa_{i}^{2}}{2\alpha}(1-e^{-2\alpha T})}
\end{align*}
Minimizing over $\lambda$ we get the bound
\begin{align*}
P\left(\sup_{0\leq t\leq T}|R_{t}^{(i)}|>\delta\right) \leq 2e^{-\delta^2(\frac{\kappa_{i}^{2}}{\alpha}\left(1-e^{-2\alpha T})\right)^{-1}}
\end{align*}
concluding the proof of the Proposition.
\end{proof}

\section{Different $\alpha$'s and Different $\sigma$'s}
In this section, we study how the combination of different interactions and different volatilities affect flocking behavior and the tail default probability distribution. We also study the effect of the structure of the system on its stability.

\subsection{Numerical simulations and motivation}

 To motivate the discussion, we first solve numerically (\ref{Eq:MainModel}) using the standard  Euler  scheme. We  plot two groups of trajectories of $N$ agents to compare two extreme situations: (i) Group A: agents with small $\alpha_i$'s  have large $\sigma_i$'s while agents with large $\alpha_i$'s have small $\sigma_i$'s; (ii) Group B: agents with small $\alpha_i$'s also have  small $\sigma_i$'s and agents with large $\alpha_i$'s also have large $\sigma_i$'s. That is, we design two groups as follows:
 \begin{align*}
\text{Group A}&:  \{(\alpha,\sigma)_{\{1,2\}},(\alpha,\sigma)_{\{3,4,5,6,7\}},(\alpha,\sigma)_{\{8,9,10\}}\}=\{(1,2),(10,1),(100,0.5)\}\nonumber\\
\text{Group B}&:\{(\alpha,\sigma)_{\{1,2\}},(\alpha,\sigma)_{\{3,4,5,6,7\}},(\alpha,\sigma)_{\{8,9,10\}}\}=\{(1,0.5),(10,1),(100,2)\}.\nonumber
 \end{align*}

Sample trajectories and loss distribution are shown on Figure \ref{fig:4.1}, Figure \ref{fig:4.2} and Figure \ref{fig:4.3}, Figure \ref{fig:4.4} respectively. In both trajectories graphs, two light grey lines are trajectories with $(\alpha,\sigma)_{\{1,2\}}$; five dark grey lines are trajectories with $(\alpha,\sigma)_{\{3,4,5,6,7\}}$; and three black lines are trajectories with $(\alpha,\sigma)_{\{8,9,10\}}$. The solid horizontal line represents the default level $\eta=-0.7$ and the centered bold red line represents the average trajectory.

We shall refer to tail default probability to be the probability that all agents default.
\begin{figure}[H]
\centering
\begin{minipage}{.49\textwidth}
\centering
\includegraphics[width=\linewidth]{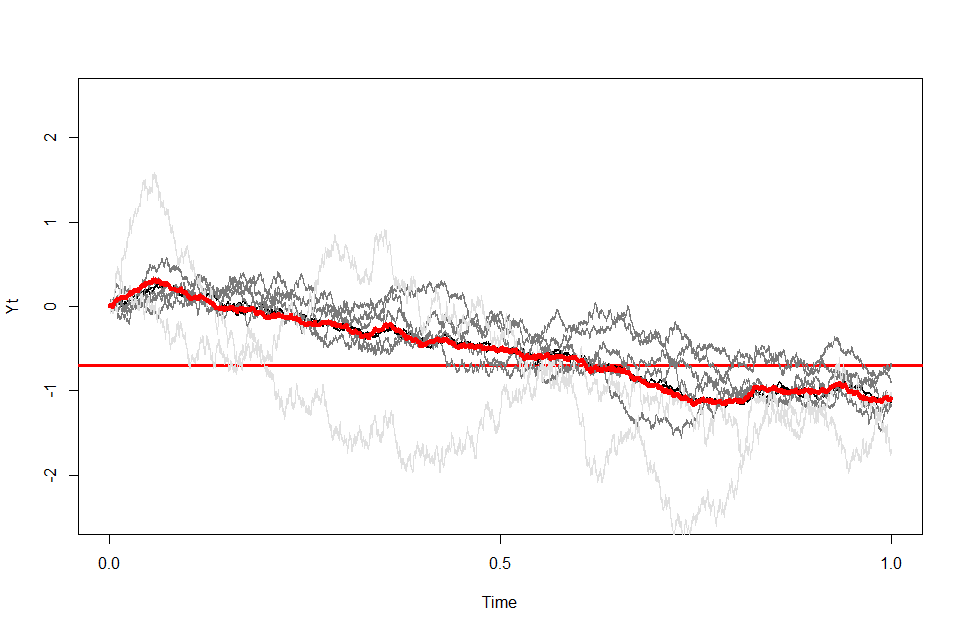}
\caption{Trajectories for Group A $\{(1,2),(10,1),(100,0.5)\}$}
\label{fig:4.1}
\end{minipage}\hfill
\begin{minipage}{.49\textwidth}
\centering
\includegraphics[width=\linewidth]{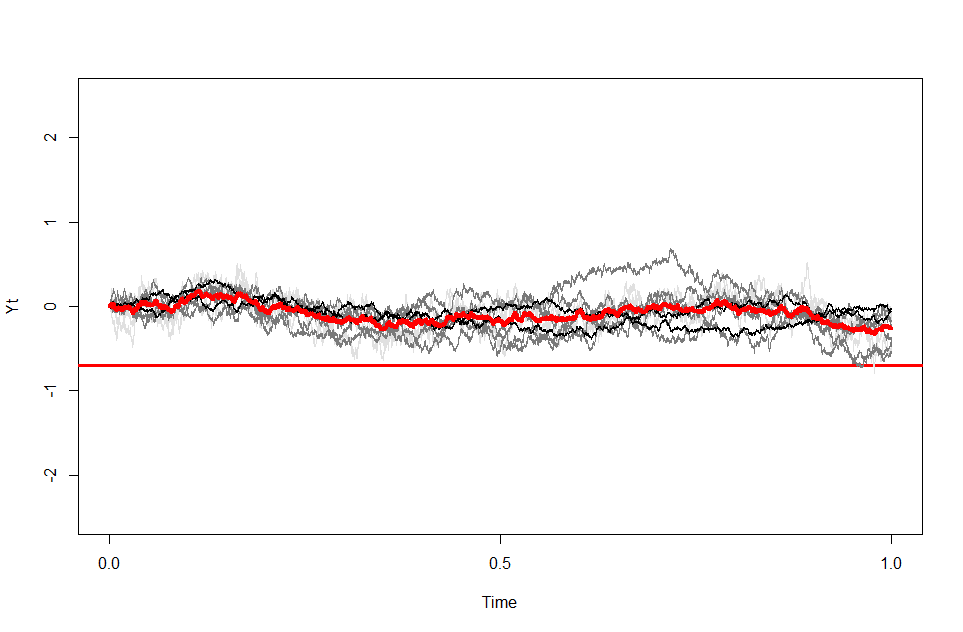}
\caption{Trajectories for Group B $\{(1,0.5),(10,1),(100,2)\}$}
\label{fig:4.3}
\end{minipage}
\end{figure}

\begin{figure}[H]
\centering
\begin{minipage}{.49\textwidth}
\centering
\includegraphics[width=\linewidth]{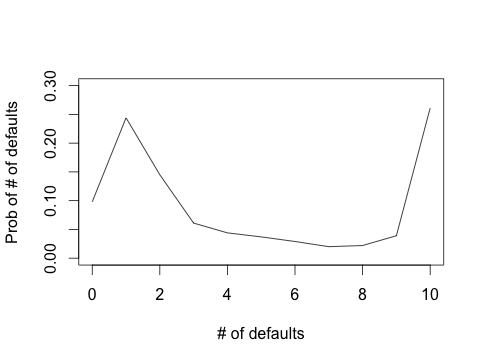}
\caption{Loss distribution with tail default probability 0.261}
\label{fig:4.2}
\end{minipage}\hfill
\begin{minipage}{.49\textwidth}
\centering
\includegraphics[width=\linewidth]{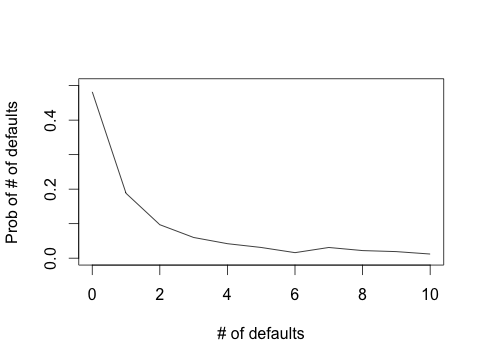}
\caption{Loss distribution with tail default probability 0.012}
\label{fig:4.4}
\end{minipage}
\end{figure}

\begin{remark}\label{R:NetworkEffect}
It is interesting to note that this seemingly small change in the structure of the system has such a profound effect on its behavior and stability.
\end{remark}

Moreover, we have the following observations 
\begin{itemize}
\item
Observation 1:
From Figures \ref{fig:4.1} and \ref{fig:4.2}, we see that agents that have: (i) small $\alpha_i$'s, (ii) large $\sigma_i$'s,  and (iii) small proportion of total number of such agents, do not  follow the mean behavior closely. However, their large volatilities contribute significantly to the trend of the overall system (or, average behavior). Other agents' trajectories seem to follow the average behavior. Meanwhile, the probability of all $N$ agents defaulting is high.

\item
Observation 2:
From Figures \ref{fig:4.3} and \ref{fig:4.4}, we see that agents that have: (i) large $\alpha_i$'s, and (ii) large $\sigma_i$'s follow the average behavior more closely, i.e., the ``flocking effect'' is more apparent. Furthermore, the probability of a large default phenomenon to happen is relatively small.
\end{itemize}

Now we introduce the concept of {\em stabilization} which will be used in the following part. Stabilization or stable behavior refers to the level of difficulty for the system to go to a default state. That is, the more stable the system is, the more difficult it is for the agents in the system to fail.

From \cite{FouqueSun2013}, we know that in the homogeneous case and under reasonable value of $\sigma$'s, $\alpha$'s dominate the stability of the system only when $\alpha$'s are large enough to cause a "flocking behavior". In this paper we see that under reasonable range of $\sigma_i$'s, when $\alpha_i$'s are relatively large to $\sigma_i$'s, the diffusion processes which have large $\alpha_i$'s will produce "flocking effect" around the average behavior. On the other hand, those that have large values of $\sigma_i$'s but small value of $\alpha_i$'s will largely affect the stability of the system. Thus, the whole system is easier to go to a default state, which is indicated by Observation 1. On the contrary, if small $\alpha_i$'s are combined with small $\sigma_{i}$'s and large $\alpha_i$'s are combined with large $\sigma_{i}$'s, then the system is more stable, which is Observation 2.

From Section 2, we learn that the tail default probability is positively correlated with $\frac{1}{N}\sum_{i=1}^N \sigma_i^2$ when $\sigma_i$'s are different but $\alpha_i$'s are the same. Now, we take different $\sigma_i$'s and $\alpha_i$'s which may lead to more complicated behavior. Therefore, we also conjecture that the composition of the system might also affect the systemic risk. 




Recall the two groups A and B, each one composed by three different subgroups:
\begin{align}
\text{Group A}&: \{(\alpha,\sigma)_{I},(\alpha,\sigma)_{II},(\alpha,\sigma)_{III}\}=\{(1,2),(10,1),(100,0.5)\} \nonumber\\
\text{Group B}&: \{(\alpha,\sigma)_{I},(\alpha,\sigma)_{II},(\alpha,\sigma)_{III}\}=\{(1,0.5),(10,1),(100,2)\}\label{Eq:DifferentSubgroups}
\end{align}

Consider two cases, Case A and Case B,  for the ratios of agents of each type within each group. Our goal is to investigate  the effect of different structures of the network on the tail default probability. In Case A, the ratios for agents of type I, II and III  are  8:1:1, 1:8:1 and 1:1:8 respectively. In Case B, the corresponding ratios  are 5:3:2, 2:5:3, 2:3:5.

In the following trajectories figures for Case A,  agents of type I, II and III  correspond to light grey lines, dark grey lines and black lines respectively. The solid horizontal line represents the "default level" $\eta=-0.7$ and the centered bold red line represents the average trajectory.
\begin{figure}[H]
\centering
{\textit{Group A: $\{(\alpha,\sigma)_{I},(\alpha,\sigma)_{II},(\alpha,\sigma)_{III}=\{(1,2),(10,1),(100,0.5)\}$}}\\
\begin{minipage}{.3\textwidth}
\centering
\includegraphics[width=\linewidth]{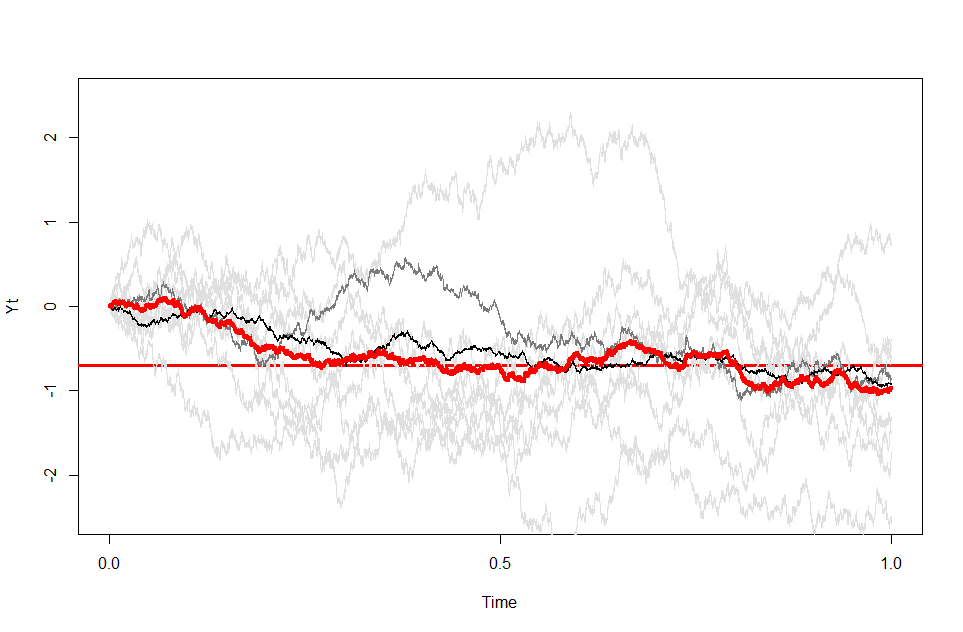}
\caption{Trajectories for Ratio of Agents 8:1:1}
\label{fig:4.5}
\end{minipage}\hfill
\begin{minipage}{.3\textwidth}
\centering
\includegraphics[width=\linewidth]{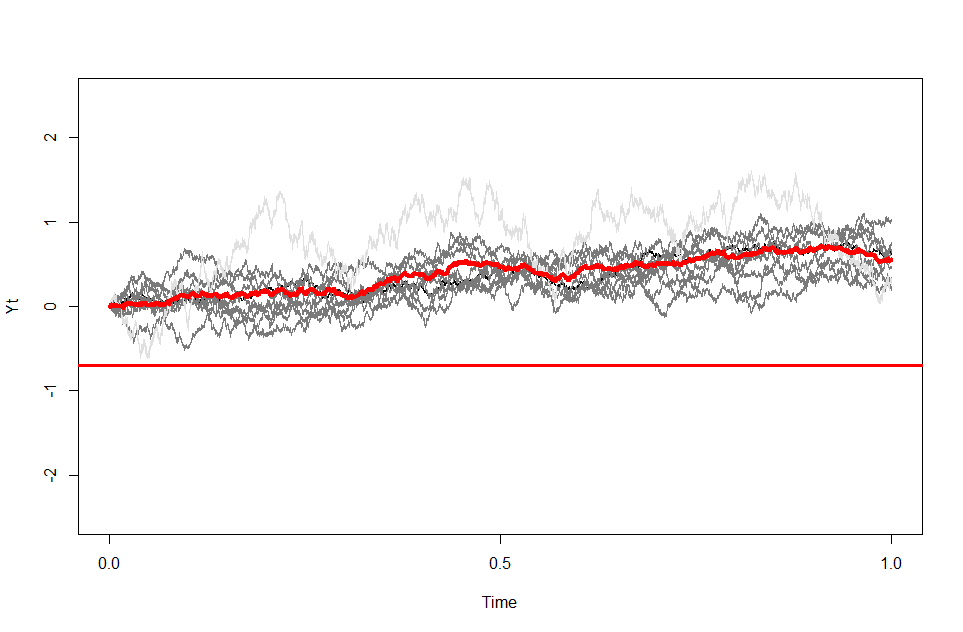}
\caption{Trajectories for Ratio of Agents 1:8:1}
\label{fig:4.6}
\end{minipage}\hfill
\begin{minipage}{.3\textwidth}
\centering
\includegraphics[width=\linewidth]{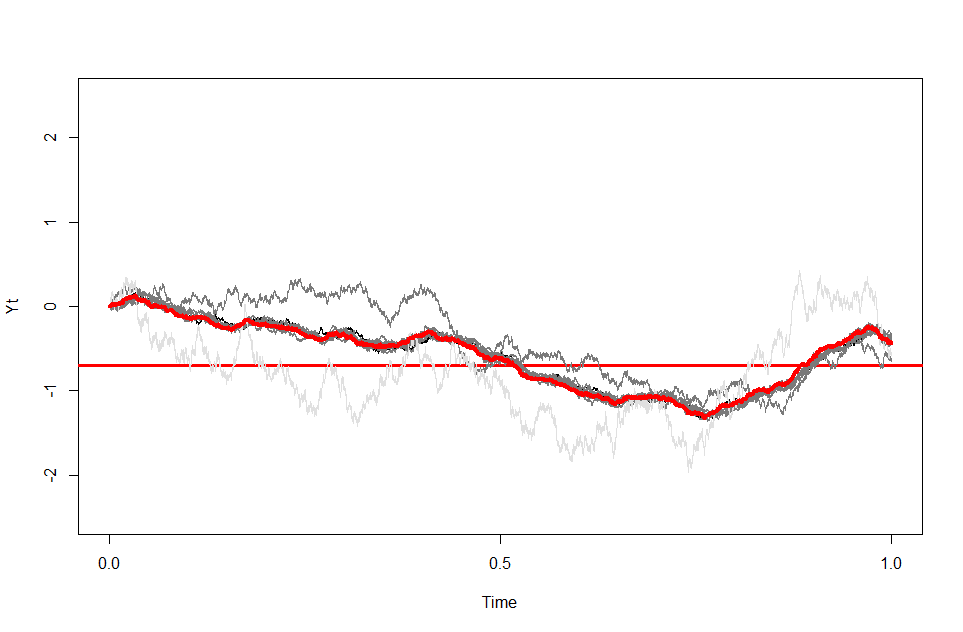}
\caption{Trajectories for Ratio of Agents 1:1:8}
\label{fig:4.7}
\end{minipage}
\end{figure}

\begin{table}[H]
\caption {Loss Distribution of Group A for Different Proportion of Agents} \label{tab:4.1}
\begin{center}
\begin{tabular}{>{\centering}m{4cm}|c|c|c|c|c|c}
\hline
number of defaulted agents of type\\ (I: II: III) & 0 & 6 & 7 & 8 & 9 & 10 \\
\hline
$(8:1:1)$ & 0.000 & 0.192 & 0.168 & 0.117 & 0.114 & 0.078\\
\hline
$(1:8:1)$ & 0.204 & 0.036 & 0.029 & 0.032 & 0.046 &0.203\\
\hline
$(1:1:8)$ & 0.274 & 0.004 & 0.005 & 0.004 & 0.019 & 0.405\\
\hline\hline
$(5:3:2)$ & 0.001 & 0.058 & 0.054 & 0.029 & 0.118 & 0.152\\
\hline
$(2:5:3)$ & 0.098 & 0.029 & 0.020 & 0.022 & 0.039 & 0.261\\
\hline
$(2:3:5)$ & 0.074 & 0.007 & 0.010 & 0.016 & 0.039 & 0.296\\
\hline
\end{tabular}
\end{center}
\end{table}
From these graphs and Table \ref{tab:4.1}, we obtain:
\begin{itemize}
\item
Observation 3:
From Figure \ref{fig:4.5} and Table \ref{tab:4.1}, we see that the system which possesses agents of subgroup I with the largest proportion seems unstable but the tail default probability is the smallest compared to other groups.
\item
Observation 4:
From Figure \ref{fig:4.6} and Table \ref{tab:4.1}, we see that  when agents of subgroup II dominate the system, the system seems to be more stable with strong "flocking behavior" and less trajectories will reach the tail default level. Furthermore, the tail default probability is intermediate in all groups.
\item
Observation 5:
From Figure \ref{fig:4.7} and Table \ref{tab:4.1}, we see that  when agents of subgroup III dominate the system, then a strong "flocking effect" exists and tail default probability is the largest when compared to the two other groups.
\end{itemize}

Next, we also explore what happens when we change the composition of the network. In particular we explore Group B, given by (\ref{Eq:DifferentSubgroups}). Note that by Observations 1 and 2, we expect that in this case the overall default probability will be smaller than that of Group A.
\begin{figure}[H]
\centering
{\textit{Group B: $\{(\alpha,\sigma)_{I},(\alpha,\sigma)_{II},(\alpha,\sigma)_{III}=\{(1,0.5),(10,1),(100,2)\}$}}\\
\begin{minipage}{.3\textwidth}
\centering
\includegraphics[width=\linewidth]{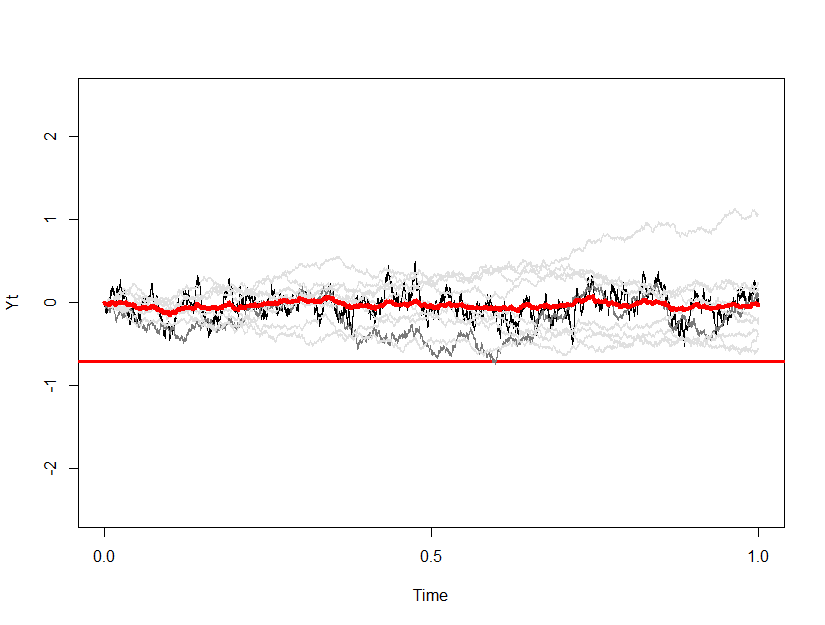}
\caption{Trajectories for Ratio of Agents 8:1:1}
\label{fig:4.8}
\end{minipage}\hfill
\begin{minipage}{.3\textwidth}
\centering
\includegraphics[width=\linewidth]{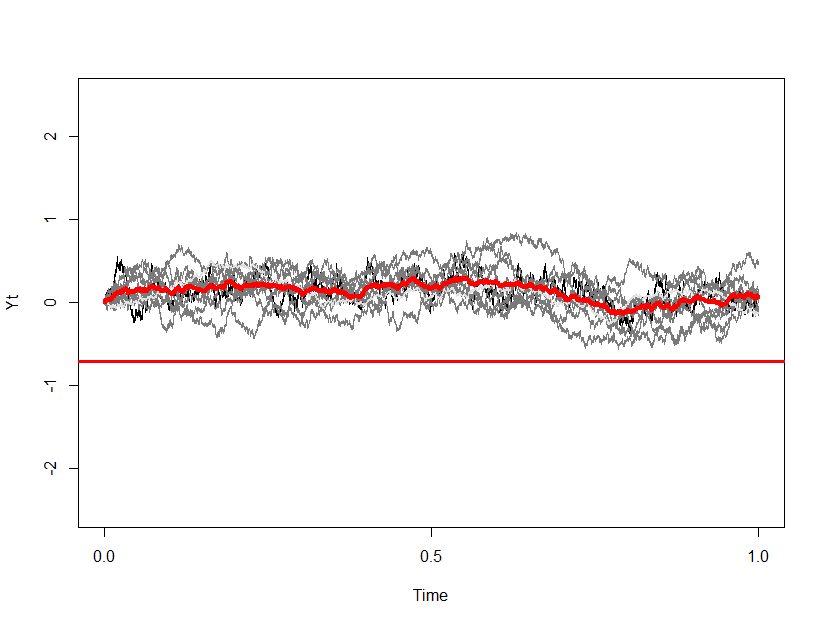}
\caption{Trajectories for Ratio of Agents 1:8:1}
\label{fig:4.9}
\end{minipage}\hfill
\begin{minipage}{.3\textwidth}
\centering
\includegraphics[width=\linewidth]{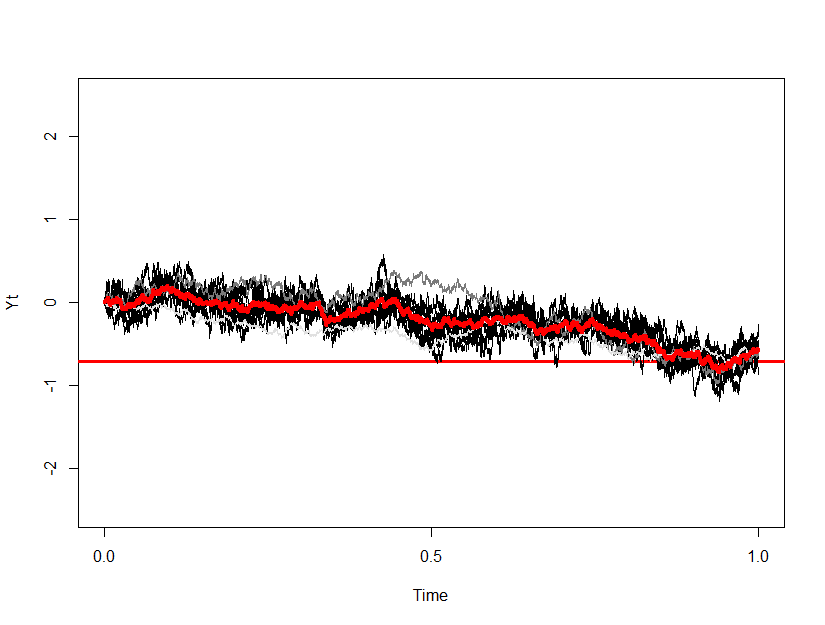}
\caption{Trajectories for Ratio of Agents 1:1:8}
\label{fig:4.10}
\end{minipage}
\end{figure}

\begin{table}[H]
\caption {Loss Distribution of Group B for Different Proportion of Agents} \label{tab:4.3}
\begin{center}
\begin{tabular}{>{\centering}m{4cm}|c|c|c|c|c|c}
\hline
number of defaulted agents of type\\ (I: II: III) & 0 & 6 & 7 & 8 & 9 & 10 \\
\hline
$(8:1:1)$ & 0.524 & 0.006 & 0.001 & 0.001 & 0.000 & 0.000\\
\hline
$(1:8:1)$ & 0.460 & 0.035 & 0.032 & 0.031 & 0.023 & 0.016\\
\hline
$(1:1:8)$ & 0.483 & 0.029 & 0.019 & 0.031 & 0.106 & 0.096\\
\hline\hline
$(5:3:2)$ &  0.510 & 0.010 & 0.008 & 0.009 & 0.001 & 0.001\\
\hline
$(2:5:3)$ &  0.489 & 0.024 & 0.024 & 0.028 & 0.032 & 0.008\\
\hline
$(2:3:5)$ & 0.516 & 0.027 & 0.031 & 0.030 & 0.040 & 0.011\\
\hline
\end{tabular}
\end{center}
\end{table}
From these graphs and  Table \ref{tab:4.3}, we observe:
\begin{itemize}

\item
Observation 7: From Figure \ref{fig:4.8} and Table \ref{tab:4.3}, we see that the system dominated by agents of subgroup I has weak "flocking behavior" and less variability. It produces the smallest tail default probability compared to the other two types of networks.

\item
Observation 8: From Figure \ref{fig:4.9} and Table \ref{tab:4.3}, we see that when the agents of subgroup II compose the largest proportion of the system, the system seems to be more stable with the combination of moderate "flocking effect" and variability. Furthermore, the tail default probability has intermediate value when compared to the other two cases.

\item
Observation 9: From Figure \ref{fig:4.10} and Table \ref{tab:4.3}, when agents of subgroup III  dominate the system, the system is relatively unstable. The large $\alpha$'s  produce a strong "flocking behavior" leading the system to overall higher tail default probability.
\end{itemize}

\subsection{Theoretical investigation}
In order to understand theoretically  the behavior of the tail default probability, we need to understand analytically the probability
\begin{equation*}
P\left(\min_{0\leq t \leq T}\frac{1}{N}\sum_{i=1}^N Y_t^{(i)}\leq \eta\right).
\end{equation*}

To simplify some of the computations we assume that $Y_0^{(i)}=0$, for $i=1,...,N$. Furthermore, we want to generalize it to finite agents maintaining different $\sigma_i$'s and $\alpha_i$'s. We divide the N agents into $K < N$ groups with distinct $(\tilde{\alpha_k}, \tilde{\sigma_k})$ for $k = 1,...,K$, $K$ is fixed, and we use $\mathcal{I}_k$ for k'th group where
$$\mathcal{I}_k= \{j\in \{1,...,N\}: (\alpha_j, \sigma_j)=(\tilde{\alpha_k}, \tilde{\sigma_k})\} $$

Namely, within each group the agents are homogeneous. We use $\bar{Y_t}^{k}$ to denote partial empirical averages and $\bar{Y_t}$ for overall empirical average. Also, $\rho_k$ denotes the percentage  of agents that belong to the k'th group. Based on the definition,
\begin{equation*}
 \bar{Y_t}^{k}=\frac{1}{\left|\mathcal{I}_k\right|}\sum_{j\in{\mathcal{I}_k}}Y_t^{(j)}, \text{ where } \; \rho_k=\frac{\left|\mathcal{I}_k\right|}{N} \; \textrm{and }\; k = 1,...,K
\end{equation*}
\begin{equation*}
\bar{Y_t}=\frac{\sum\limits_{i=1}^N Y_t^{(i)}}{N}=\sum_{k=1}^{K} \rho_k\bar{Y_t}^k
\end{equation*}
Therefore, we obtain for the partial average of the kth group:
\begin{equation*}\begin{split}
\bar{Y_t}^{k}&=\frac{1}{\left|\mathcal{I}_k\right|}\sum_{i\in{\mathcal{I}_k}}Y_t^{(i)}
=\frac{1}{\left|\mathcal{I}_k\right|}\sum_{i\in{\mathcal{I}_k}}\left[\alpha_k\int_{0}^{t}(\bar{Y_s}-Y_s^{(i)})ds + \sigma_kW_t^{(i)}\right]\\
&=\alpha_k\int_{0}^{t}\left(\bar{Y_s}-\frac{1}{\left|\mathcal{I}_k\right|}\sum_{i\in{\mathcal{I}_k}}Y_s^{(i)}\right)ds + \sigma_k\frac{1}{\left|\mathcal{I}_k\right|}\sum_{i\in{\mathcal{I}_k}}W_t^{(i)}\\
&=\alpha_k\int_{0}^{t}\left(\bar{Y_s}-\bar{Y_s}^{k}\right)ds +\frac{\sigma_k}{\sqrt{\rho_kN}}\tilde{W_t}^{k}\\
\end{split}
\end{equation*}
where $ \tilde{W_t}^{k}, k = 1,...,K$ are independent standard Brownian motions. Thus,
\begin{equation*}
\begin{split}
d\bar{Y_t}^{k}
&=\alpha_k(\bar{Y_t}-\bar{Y_t}^{k})dt+\frac{\sigma_k}{\sqrt{\rho_kN}}d\tilde{W_t}^{k} \\
&=\left[ \alpha_k(\rho_k-1)\bar{Y_t}^k+\alpha_k\sum_{i=1,i\neq k}^K \rho_i \bar{Y_t^{i}} \right]dt+ \frac{1}{\sqrt{\frac{\rho_kN}{\sigma_k^2}}}d\tilde{W_t}^k
\end{split}
\end{equation*}
Now we write the system in matrix form with column vector $\bar{y_t}=(\bar{Y_t}^{k})_{k=1...K}$ and $\tilde{w_t}=(\tilde{W_t} ^{k})_{k=1...K}$:
\begin{equation*}
d\bar{y_t}=M\bar{y_t}dt+\frac{1}{\sqrt{N}}R^{-1/2}d\tilde{w_t}
\end{equation*}
where $$M_{ij}=-\alpha_i(\delta_{ij}-\rho_j), \; R_{ij}=\rho_i\sigma_i^{-2}\delta_{ij}$$
with
$$\delta_{ij}=\left\{
\begin{aligned}
&1, \; i=j \\
&0, \; i\neq j
\end{aligned}\right. \;\;\;
i,j=1,...,K.$$
By It\^{o} stochastic integration, see for example \cite{KaraztasShreveBook}, the explicit solution to this system is:
\begin{equation*}
\bar{y_t}
=e^{Mt}\bar{y_0}+\frac{1}{\sqrt{N}}\int_{0}^{t}e^{M(t-s)}R^{(-1/2)}d\tilde{w_s}\\
\end{equation*}
Therefore,
\begin{equation*}
\bar{Y_t}=\varrho^T \bar{y}_{t}
=\frac{1}{\sqrt{N}}\varrho^T \int_{0}^{t}  e^{M(t-s)}R^{(-1/2)}d\tilde{w_s}\\
\end{equation*}
and we get that in  distribution,
\[
\bar{Y_t} \sim N \left( \;0\;,\;{\frac{1}{N}V_T^2} \right)
\]
where
\begin{equation}
V_t^2=\int_{0}^{t}\mathcal{\varrho}^Te^{Ms}R^{-1}(e^{Ms})^T\mathcal{\varrho}ds,\; \; \ \varrho=(\rho_k)_{k=1,...,K}.\label{Eq:FormulaForVariance}
\end{equation}

Next, we focus on the ensemble average which reaches the default level at time T. Let $W_t^* \sim  N \left( \;0\;,\; {\frac{1}{N}V^2_t} \right)$. The default probability is:
\begin{equation*} \begin{split}
P\left( \min_{0 \leq t \leq T} \frac{1}{N} \sum_{i=1}^N Y_t^{(i)} \leq \eta \right)
&= P \left( \min_{0 \leq t \leq T} W_t^* \leq \eta \right) = 2P \left( {W_T}^* \leq \eta \right)  \\
&= 2P \left( \frac{1}{\sqrt{N}} V_T \tilde {W} \leq \eta \right)= 2 \Phi \left( \frac{\eta \sqrt{N}}{V_T} \right)
\end{split}
\end{equation*}
Where $\tilde{W} \sim N(0,1)$. Then using Laplace asymptotics, we obtain:

\begin{equation}
\begin{split}
P\left( \min_{0 \leq t \leq T} \frac{1}{N} \sum_{i=1}^N Y_t^{(i)} \leq \eta \right)
\approx 2\exp\left\{-\frac{\eta^2N}{2V_T^2}\right\} \label{Eq:ApproximationDiffAlpha}
\end{split}
\end{equation}
Therefore, we have obtained the following theorem.
\begin{theorem}\label{T:Main}
Consider the full heterogeneous case studied in this section. We  have
\begin{equation} \begin{split}
\lim_{N\to\infty} -\frac{1}{N} \log P \left( \min_{0 \leq t \leq T} \frac{1}{N} \sum_{i=1}^N Y_t^{(i)} \leq \eta \right)
&= \frac{\eta^{2}}{2V_T^2}
\end{split}
\end{equation}
where $V_T^2$ is given by (\ref{Eq:FormulaForVariance}), provided that $V_T^2<\infty$.

\end{theorem}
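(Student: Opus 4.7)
The plan is to reduce the default event to the tail of a single Gaussian random variable and then apply the standard Gaussian tail (Laplace) asymptotic. The reduction has two ingredients: identify the empirical mean $\bar Y_t = \frac{1}{N}\sum_i Y_t^{(i)}$ as a centered Gaussian process with an explicitly computable variance $V_t^2/N$, and then convert the minimum event $\min_{0\le t\le T}\bar Y_t \le \eta$ into a terminal-value probability by the reflection principle.

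For the first step I would group the $N$ agents into $K$ classes of common parameters $(\tilde\alpha_k,\tilde\sigma_k)$ and work with the partial averages $\bar Y_t^k$. Summing (\ref{Eq:MainModel}) over $i\in\mathcal{I}_k$ and using that the class-averaged Brownian motion $\frac{1}{|\mathcal{I}_k|}\sum_{i\in\mathcal{I}_k} W_t^{(i)}$ is in distribution a single standard Brownian motion $\tilde W_t^k$ rescaled by $(\rho_k N)^{-1/2}$, the SDEs for the $\bar Y_t^k$ close up into the $K$-dimensional linear system $d\bar y_t = M\bar y_t\,dt + N^{-1/2}R^{-1/2}\,d\tilde w_t$ with $M_{ij} = -\alpha_i(\delta_{ij}-\rho_j)$ and $R_{ij} = \rho_i\sigma_i^{-2}\delta_{ij}$. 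Under the (non-essential) assumption $Y_0^{(i)}=0$, this system has the explicit matrix-exponential solution $\bar y_t = N^{-1/2}\int_0^t e^{M(t-s)}R^{-1/2}\,d\tilde w_s$, so $\bar Y_t = \varrho^T\bar y_t$ is a Wiener integral of a deterministic integrand and therefore a centered Gaussian with variance $V_t^2/N$, with $V_t^2$ given by (\ref{Eq:FormulaForVariance}).

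For the second step, $\bar Y_t$ is a continuous Gaussian martingale with deterministic quadratic variation $V_t^2/N$, so by Dambis-Dubins-Schwarz it is a time-changed standard Brownian motion. Monotonicity of $t\mapsto V_t^2$ ensures that the running minimum of $\bar Y$ on $[0,T]$ equals in law the running minimum of a Brownian motion on $[0,V_T^2/N]$, and so the reflection principle yields
\begin{equation*}
P\!\left(\min_{0\le t\le T}\bar Y_t \le \eta\right) = 2\,P(\bar Y_T \le \eta) = 2\Phi\!\left(\frac{\eta\sqrt N}{V_T}\right).
\end{equation*}
Applying the sharp Gaussian tail asymptotic $\Phi(-x)\sim (x\sqrt{2\pi})^{-1}e^{-x^2/2}$ at $x = -\eta\sqrt N/V_T \to \infty$ (recall $\eta<0$) and taking $-\frac1N\log$ of both sides gives the claimed limit $\eta^2/(2V_T^2)$.

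The bulk of the labor is linear-algebraic bookkeeping in the first step: grouping the agents, producing the closed system, and writing $V_T^2$ as $\int_0^T \varrho^T e^{Ms} R^{-1}(e^{Ms})^T\varrho\,ds$ via the isometry for Wiener integrals. The only genuine technical observation is that linearity of the drift in $Y$ together with the zero initial condition makes $\bar Y_t$ a pure Wiener integral with no residual randomness beyond the driving noise; once this is in place, DDS, the reflection principle, and Gaussian tail asymptotics finish the argument immediately. Finiteness of $V_T^2$ on any compact $[0,T]$ is automatic because $e^{Ms}$ is bounded on $[0,T]$ and $R$ is a fixed positive-definite diagonal matrix, so the theorem's stated proviso is essentially a non-issue.
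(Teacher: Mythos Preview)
Your overall plan is essentially the paper's own argument: group the agents, close the linear system for the partial averages $\bar y_t$, solve via the matrix exponential, read off $\bar Y_t=\varrho^T\bar y_t\sim N(0,V_t^2/N)$, then pass through the reflection principle and a Gaussian tail estimate. Structurally you are doing exactly what the paper does.

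There is, however, a genuine gap in your second step. You claim that $\bar Y_t$ is a continuous Gaussian \emph{martingale} with quadratic variation $V_t^2/N$, so that Dambis--Dubins--Schwarz and the reflection identity $P(\min_{t\le T}\bar Y_t\le\eta)=2\Phi(\eta\sqrt N/V_T)$ apply. But $\bar Y_t$ is \emph{not} a martingale when the $\alpha_k$ differ: from $d\bar Y_t=\varrho^T M\,\bar y_t\,dt+N^{-1/2}\varrho^T R^{-1/2}\,d\tilde w_t$ one has a nonzero drift, since $(\varrho^T M)_j=\rho_j\bigl(\sum_k\rho_k\alpha_k-\alpha_j\bigr)$ vanishes only in the equal-$\alpha$ case of Section~2. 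Equivalently, in the representation $\bar Y_t=N^{-1/2}\int_0^t\varrho^T e^{M(t-s)}R^{-1/2}\,d\tilde w_s$ the integrand depends on the upper limit $t$, so this is not an It\^o integral in $t$ and need not be a martingale; the actual quadratic variation of $\bar Y$ is $\frac{t}{N}\sum_k\rho_k\sigma_k^2$, which differs from $V_t^2/N$ precisely in the heterogeneous-$\alpha$ regime. The paper writes the reflection identity without justification, so your DDS step is not introducing an error the paper avoids---but it does make explicit a claim that fails here. To make the limit rigorous one would need something other than the exact reflection identity: for instance, the easy lower bound $P(\min_{t\le T}\bar Y_t\le\eta)\ge P(\bar Y_T\le\eta)$ already gives one direction, while the matching upper bound would require a Gaussian-process tail inequality (Borell--TIS) or a Schilder-type LDP for the process $\bar Y$.
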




Let us check the accuracy of the approximation (\ref{Eq:ApproximationDiffAlpha}) choosing the groups $8:1:1$ and $2:5:3$ as in Table \ref{tab:4.1}.  The number of simulations is limited to $M=500$ due to computational resources and  $\{(\alpha,\sigma)_{I},(\alpha,\sigma)_{II},(\alpha,\sigma)_{III}\}=\{(1,2),(10,1),(100,0.5)\}$.

We compute $-\frac{1}{N} \log P(A)$ by simulation and we compare with the asymptotic value $\frac{\eta^2}{2V_T^2}$ as shown as in the following graphs:
\begin{figure}[H]
\centering
\begin{minipage}{.48\textwidth}
\centering
\includegraphics[width=\linewidth, height=5cm]{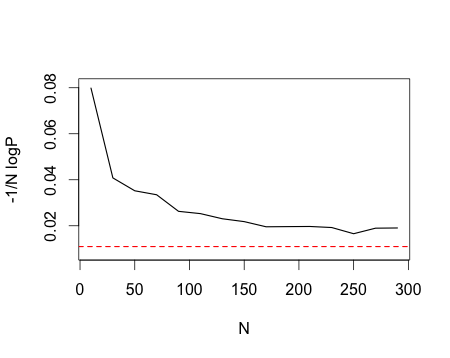}
\caption{Comparison between $-\frac{1}{N} \log P(A)$ and $ \frac{\eta^{2}}{2V_T^2}$ for Group A with Ratio 8:1:1}
\label{fig:*4.7}
\end{minipage}\hfill
\begin{minipage}{.48\textwidth}
\centering
\includegraphics[width=\linewidth, height=5cm]{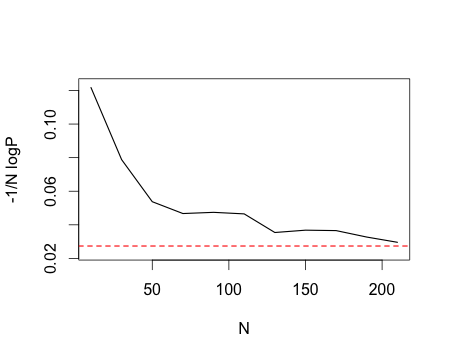}
\caption{Comparison between $-\frac{1}{N} \log P(A)$ and $ \frac{\eta^{2}}{2V_T^2}$ for Group A with Ratio 2:5:3}
\label{fig:*4.8}
\end{minipage}
\end{figure}
Based on the graphs \ref{fig:*4.7} and \ref{fig:*4.8}, we observe that  $-\frac{1}{N} \log P(A)$ converges relatively fast to $\frac{\eta^2}{2V_T^2}$ with increasing $N$, which guarantees the accuracy of (\ref{Eq:ApproximationDiffAlpha}). Of course due to the logarithmic limit of Theorem \ref{T:Main}, one is loosing the prefactor information in this approximation, see Section 4 for more discussion on this issue.

In the case $K=2$, we can obtain an explicit formula for $V_T^2$, which seems to be a much harder task when $K \geq 3$. For this purpose we have Lemma \ref{L:MainK=2}.
\begin{lemma} \label{L:MainK=2}
Consider the setting of the Theorem \ref{T:Main} for K=2. Then, with $\gamma = \alpha_2 \rho_1 + \alpha_1 \rho_2$, we have

\begin{equation*}
\begin{split}
V_T^2 = \frac{\sigma_1^2 \rho_1}{\gamma^2} \left[\alpha_2^2T + \frac{\rho_2^2 \left(\alpha_1 - \alpha_2\right)^2}{2\gamma}\left(1-e^{-2\gamma T}\right) + \frac{2\alpha_2\rho_2 \left(\alpha_1 - \alpha_2\right)}{\gamma}\left(1-e^{-\gamma T}\right)\right] \\
+ \frac{\sigma_2^2 \rho_2}{\gamma^2} \left[\alpha_1^2T + \frac{\rho_1^2 \left(\alpha_1 - \alpha_2\right)^2}{2\gamma}\left(1-e^{-2\gamma T}\right) + \frac{2\alpha_1\rho_1 \left(\alpha_2 - \alpha_1\right)}{\gamma}\left(1-e^{-\gamma T}\right)\right]
\end{split}
\end{equation*}
\end{lemma}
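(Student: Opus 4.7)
The plan is to exploit the fact that for $K=2$ the $2\times 2$ matrix $M$ is small enough that $e^{Ms}$ admits a closed form, after which $V_T^2$ reduces to two elementary exponential integrals.

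First I would write $M$ out explicitly using $\rho_1+\rho_2=1$, obtaining
\[
M=\begin{pmatrix}-\alpha_1\rho_2 & \alpha_1\rho_2\\ \alpha_2\rho_1 & -\alpha_2\rho_1\end{pmatrix}.
\]
The row sums vanish, so $(1,1)^T$ is a right eigenvector with eigenvalue $0$; the trace is $-\gamma$ with $\gamma=\alpha_1\rho_2+\alpha_2\rho_1$, so the other eigenvalue must be $-\gamma$, with eigenvector proportional to $(\alpha_1\rho_2,-\alpha_2\rho_1)^T$. I would then diagonalize $M=PDP^{-1}$ and multiply out to get
\[
e^{Ms}=\frac{1}{\gamma}\begin{pmatrix}\alpha_2\rho_1+\alpha_1\rho_2 e^{-\gamma s} & \alpha_1\rho_2(1-e^{-\gamma s})\\ \alpha_2\rho_1(1-e^{-\gamma s}) & \alpha_1\rho_2+\alpha_2\rho_1 e^{-\gamma s}\end{pmatrix}.
\]

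Next I would compute $\varrho^T e^{Ms}$ with $\varrho=(\rho_1,\rho_2)^T$; a short calculation, using $\rho_1+\rho_2=1$, gives the clean expression
\[
\varrho^T e^{Ms}=\left(\frac{\rho_1}{\gamma}\bigl[\alpha_2+(\alpha_1-\alpha_2)\rho_2 e^{-\gamma s}\bigr],\;\frac{\rho_2}{\gamma}\bigl[\alpha_1+(\alpha_2-\alpha_1)\rho_1 e^{-\gamma s}\bigr]\right).
\]
Since $R^{-1}=\mathrm{diag}(\sigma_1^2/\rho_1,\sigma_2^2/\rho_2)$ is diagonal, the integrand in \eqref{Eq:FormulaForVariance} splits as a sum of two squares:
\[
\varrho^T e^{Ms}R^{-1}(e^{Ms})^T\varrho=\frac{\sigma_1^2\rho_1}{\gamma^2}\bigl[\alpha_2+(\alpha_1-\alpha_2)\rho_2 e^{-\gamma s}\bigr]^2+\frac{\sigma_2^2\rho_2}{\gamma^2}\bigl[\alpha_1+(\alpha_2-\alpha_1)\rho_1 e^{-\gamma s}\bigr]^2.
\]

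Finally I would expand each square and integrate $s\in[0,T]$ term by term, using $\int_0^T e^{-\gamma s}ds=(1-e^{-\gamma T})/\gamma$ and $\int_0^T e^{-2\gamma s}ds=(1-e^{-2\gamma T})/(2\gamma)$. Collecting the three resulting contributions per line reproduces the stated formula: the $\alpha_k^2 T$ term from the constant piece, the $\rho_{3-k}^2(\alpha_1-\alpha_2)^2/(2\gamma)$ piece from the $e^{-2\gamma s}$ integral, and the cross term from $e^{-\gamma s}$. There is no real obstacle here beyond careful bookkeeping; the only step where one must be attentive is arranging the row-sum-zero structure of $M$ to identify the eigenvalues quickly, after which everything is linear algebra and a pair of elementary integrals.
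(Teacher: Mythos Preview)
Your proof is correct and follows essentially the same route as the paper: both compute $e^{Ms}$ explicitly, plug into the integrand to obtain
\[
\varrho^T e^{Ms}R^{-1}(e^{Ms})^T\varrho=\frac{\sigma_1^2\rho_1}{\gamma^2}\bigl[\alpha_2+(\alpha_1-\alpha_2)\rho_2 e^{-\gamma s}\bigr]^2+\frac{\sigma_2^2\rho_2}{\gamma^2}\bigl[\alpha_1+(\alpha_2-\alpha_1)\rho_1 e^{-\gamma s}\bigr]^2,
\]
and then integrate term by term. The only cosmetic difference is how $e^{Ms}$ is obtained: the paper recognizes $M$ as the generator of a two-state continuous-time Markov chain and quotes the standard transition-matrix formula, whereas you read off the eigenvalues $0$ and $-\gamma$ from the row-sum-zero structure and the trace and diagonalize directly; the resulting matrix is identical to the paper's $P(t)$.
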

\begin{proof} [Proof of Lemma \ref{L:MainK=2}]
Notice that we can view the matrix M as the infinitesimal generator of a continuous time Markov chain with two states. Let $P(t)=e^{Mt}$ be the transition probability matrix of such a Markov chain. Then, it is known that if we set $c=\alpha_1 \rho_2$ and $d=\alpha_2\rho_1$,
\begin{equation*}
P(t)=
\begin{bmatrix}
\frac{d}{c+d}+\frac{c}{c+d}e^{-\left(c+d\right)t}
&\frac{c}{c+d}-\frac{c}{c+d}e^{-\left(c+d\right)t} \\
\frac{d}{c+d}-\frac{d}{c+d}e^{-\left(c+d\right)t}
&\frac{c}{c+d}+\frac{d}{c+d}e^{-\left(c+d\right)t}
\end{bmatrix}
\end{equation*}
Plugging the expression in the integrand of $V_T^2$ and doing the algebra we get
\begin{small}
\begin{equation*}
\begin{split}
\mathcal{\varrho}^Te^{Ms}R^{-1}(e^{Ms})^T\mathcal{\varrho} = \frac{\sigma_1^2\rho_1}{\gamma^2}\left(\alpha_2 + \rho_2\left(\alpha_1-\alpha_2\right)e^{-\gamma s}\right)^2
+\frac{\sigma_2^2\rho_2}{\gamma^2}\left(\alpha_1 + \rho_1\left(\alpha_2-\alpha_1\right)e^{-\gamma s}\right)^2.
\end{split}
\end{equation*}
\end{small}

By integrating this we get the expression of the lemma.
\end{proof}
In the case $K \geq 3$, an explicit formula seems to be quite hard to derive. We would need the spectral decomposition for $M$. For this reason we derive an approximation to $V^{2}_{T}$ in the case where the $\alpha_{i}'$s are not too different from each other. In particular, we assume that each group differs from another one by the rate $\delta c_i$ where $c_i$ is a bounded real constant, $ c_i \neq c_j$ for $i\neq j$ and $ 0 < \delta \ll 1$. That is:\\
\begin{equation*}
\alpha_k = \bar{\alpha} (1+\delta c_k) \\
\end{equation*}
where $ \bar{\alpha} = \sum\limits_{i=1}^K \rho_i \alpha_i = \sum\limits_{i=1}^N \alpha_i\Big/N $. Let us set
\begin{equation}\label{Eq:ApproximateVariance}
\begin{split}
\hat{V}_{T}^2(\delta)
=T\sum_{i=1}^K \rho_i\sigma_i^{2}+2\delta\left(\frac{1}{\bar{\alpha}}-T-\frac{e^{-\bar{\alpha}T}}{\bar{\alpha}}\right)\sum_{i=1}^K \rho_i c_i \sigma_i^2 \\
+\delta^{2}\left(-\frac{11}{2\bar{\alpha}}+3T+\frac{6+2T\bar{\alpha}}{\bar{\alpha}}e^{-\bar{\alpha}T}-\frac{1}{2\bar{\alpha}}e^{-2\bar{\alpha}T}\right)\sum_{i=1}^K \rho_i c_i^2\sigma_i^2 \\
-2\delta^{2}\left(-\frac{2}{\bar{\alpha}}+T+\frac{2}{\bar{\alpha}}e^{-\bar{\alpha}T}+Te^{-\bar{\alpha}T}\right) \left( \sum_{i=1}^K \rho_i\sigma_i^{2} \sum_{i=1}^K \rho_i c_i^{2} \right)
\end{split}
\end{equation}

Notice that if $\delta$ is small enough so that we can ignore the $O(\delta^{2})$ terms, then large $\bar{\alpha}$ will also imply smaller $\hat{V}_{T}^2(\delta)$. Then, we have the following lemma, whose proof is in the Appendix.
\begin{lemma}\label{L:Main}
Let $V_T^2$ be given by (\ref{Eq:FormulaForVariance}) and ${\hat{V}_{T}^2(\delta)}$ be given by (\ref{Eq:ApproximateVariance}). Then, as $\delta \downarrow 0$, we have the error bound $|V_T^2-\hat{V}_{T}^2(\delta)| \leq K\delta^3+O(\delta^4)$ where $K >0$.
\end{lemma}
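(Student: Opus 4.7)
The plan is to treat $V_T^2$ as a real-analytic function of $\delta$ near $0$ and recognize $\hat V_T^2(\delta)$ as its degree-two Taylor polynomial, so that the claimed bound reduces to a remainder estimate for a uniformly convergent matrix perturbation series. Write $M = M_0 + \delta M_1$, where $M_0$ is the $\alpha_i\equiv\bar\alpha$ instance of $M$ and $M_{1,ij} = -\bar\alpha\, c_i(\delta_{ij}-\rho_j)$. Because $\bar\alpha = \sum_i\rho_i\alpha_i$, the constraint $\sum_i\rho_i c_i = 0$ is built in and will be used repeatedly.

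The structural observation driving the computation is $M_0 = -\bar\alpha(I - P)$ with the rank-one projection $P = \mathbf{1}\varrho^T$ (since $\varrho^T\mathbf{1}=1$); hence
\[
e^{M_0 s} = P + e^{-\bar\alpha s}(I-P), \qquad \varrho^T e^{M_0 s} = \varrho^T.
\]
Setting $v^T = (\rho_1 c_1,\ldots,\rho_K c_K)$, one checks $\varrho^T M_1 = -\bar\alpha v^T$ and $v^T\mathbf{1}=0$, so $v^T e^{M_0 u} = e^{-\bar\alpha u}v^T$. A second application of $M_1$ gives $v^T M_1 = -\bar\alpha w^T(I-P)$ with $w_i = \rho_i c_i^2$, together with $w^T(I-P) e^{M_0 s} = e^{-\bar\alpha s} w^T(I-P)$. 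These identities collapse every matrix object in the expansion to products of scalar exponentials in $s$.

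I would then apply the Dyson (Duhamel) series
\[
e^{Ms} = e^{M_0 s} + \delta A_1(s) + \delta^2 A_2(s) + \delta^3 R(s,\delta),
\]
with $A_1(s) = \int_0^s e^{M_0(s-u)} M_1 e^{M_0 u}\,du$, the analogous nested double integral for $A_2(s)$, and $\sup_{s\in[0,T],\,|\delta|\leq\delta_0}\|R(s,\delta)\| < \infty$. Substituting into $\varrho^T e^{Ms} R^{-1}(e^{Ms})^T\varrho$ and collecting powers of $\delta$: at order $\delta^0$ one gets $\varrho^T R^{-1}\varrho = \sum_i\rho_i\sigma_i^2$; at order $\delta^1$ one gets $2\,\varrho^T A_1(s) R^{-1}\varrho = -2(1-e^{-\bar\alpha s})\sum_i\rho_i c_i\sigma_i^2$, using $\varrho^T A_1(s) = -(1-e^{-\bar\alpha s})v^T$ and $v^T R^{-1}\varrho = \sum_i\rho_i c_i\sigma_i^2$; at order $\delta^2$ one gets $2\,\varrho^T A_2(s) R^{-1}\varrho + \varrho^T A_1(s) R^{-1} A_1(s)^T\varrho$, which by the identities above evaluates to
\[
(1 - e^{-\bar\alpha s})^2\sum_i\rho_i c_i^2\sigma_i^2 + 2\bar\alpha^2\!\int_0^s\! u e^{-\bar\alpha u}\,du\cdot\Big[\sum_i\rho_i c_i^2\sigma_i^2 - \big(\sum_j\rho_j c_j^2\big)\big(\sum_i\rho_i\sigma_i^2\big)\Big].
\]
The splitting of the last bracket into two separable sums is precisely what produces the two distinct $\delta^2$ terms in (\ref{Eq:ApproximateVariance}); it originates from applying $(I-P)$ to $R^{-1}\varrho$ inside $w^T(I-P)R^{-1}\varrho$.

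Finally, performing the outer $\int_0^T ds$ integration on these elementary scalar functions matches, coefficient by coefficient, the bracketed functions of $\bar\alpha$ and $T$ that appear in (\ref{Eq:ApproximateVariance}); this identifies $\hat V_T^2(\delta)$ as the second-order Taylor polynomial of $\delta\mapsto V_T^2$ at $\delta=0$. The error bound $|V_T^2 - \hat V_T^2(\delta)| \leq K\delta^3 + O(\delta^4)$ then follows from the uniform bound on the Dyson remainder $R(s,\delta)$, integrated against the bounded matrix $R^{-1}$ and the vector $\varrho$. The main obstacle is the $\delta^2$ bookkeeping: one must keep track of which $(I-P)$ factor is acting on which side and symmetrize the two $\varrho^T A_2(s) R^{-1}\varrho$ contributions correctly against $\varrho$, since only then do the two separable sums in (\ref{Eq:ApproximateVariance}) emerge in the right combination. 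Everything else reduces, by virtue of the projection structure of $M_0$, to scalar calculus.
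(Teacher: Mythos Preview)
Your proposal is correct and follows essentially the same route as the paper: both write $M = -\bar\alpha(I-\mathbf 1\varrho^{T}) - \delta\bar\alpha N$, expand $\varrho^{T}e^{Ms}$ to second order in $\delta$ using the projection structure (which forces $\varrho^{T}e^{M_0 s}=\varrho^{T}$, $\varrho^{T}N=v^{T}$, $\varrho^{T}N^{2}=w^{T}(I-P)$), integrate over $[0,T]$, and invoke $\sum_i\rho_i c_i=0$ to collapse the cross terms into the two separable sums appearing in $\hat V_T^{2}(\delta)$. The only cosmetic difference is that you organize the expansion through the Dyson (Duhamel) integral series, whereas the paper works directly with the power series $\sum_{n}\frac{(-\bar\alpha t)^{n}}{n!}(\bar M+\delta N)^{n}$ exploiting the idempotency $\bar M^{n}=\bar M$; the two calculations produce identical zeroth-, first-, and second-order coefficients and the same $O(\delta^{3})$ remainder bound.
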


Notice that if $\delta=0$ then (\ref{Eq:ApproximateVariance})  gives back the formulas of Section 2. Next, we check numerically the accuracy of the approximation of (\ref{Eq:ApproximateVariance}). We firstly compare the difference of $V_{T}^{2}$ and $\hat{V}_T^2$.
\begin{table}[H]
\caption {Comparison of $V_T^2$ and $\hat{V}_T^2$}
\label{tab:4.2}
\begin{center}
\begin{tabular}
{ p{2cm}|p{2cm}|p{2cm}|p{2cm}|p{2cm} }
 \hline
 \multicolumn{5}{c}{Common parameters:} \\
 \multicolumn{5}{c}{$\delta=0.001$, $c_k=(-60,0,40)$, $\sigma=(5,2,1)$, $\rho=(0.2,0.5,0.3)$, $M=500$, $T=1$}\\
 \hline
 $\bar{\alpha}$ & $\alpha$ & $V_T^2$ & $\hat{V_T}^2$ & $\frac{|\hat{V}_T^2-V_T^2|}{V_T^2}$\\
 \hline
 10   & (9.4,10,10.4)    &7.341& 7.850 & 6.9\%\\
50 & (47,50,52) & 7.401  &7.901 & 6.7\%\\
100 &(94,100,104) & 7.409&  7.907 & 6.7\%\\
 \hline
\end{tabular}
\end{center}
\end{table}

Secondly, we verify the accuracy of (\ref{Eq:ApproximateVariance}) by  comparing $-\frac{1}{N}\log P(A)$ to the approximating value  $\frac{\eta^2}{2\hat{V}_T^2}$  for groups with $\bar{\alpha}=10,50,100$.
\begin{figure}[H]
\centering
\begin{minipage}{.3\textwidth}
\centering
\includegraphics[width=\linewidth]{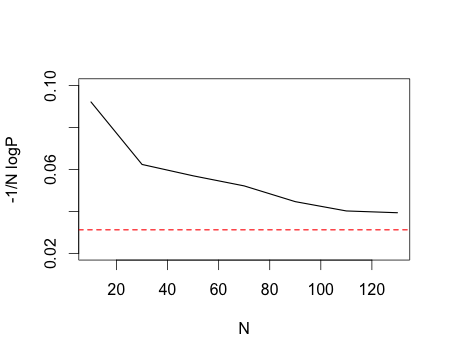}
\caption{Comparison between $-\frac{1}{N} \log P(A)$ and $ \frac{\eta^{2}}{2\hat{V}_T^2}$ under $\bar{\alpha}=10$}
\label{fig:4.11}
\end{minipage}\hfill
\begin{minipage}{.3\textwidth}
\centering
\includegraphics[width=\linewidth]{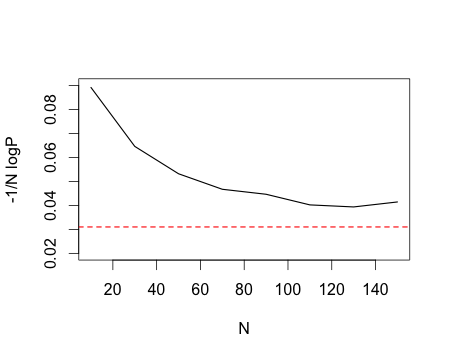}
\caption{Comparison between $-\frac{1}{N} \log P(A)$ and $ \frac{\eta^{2}}{2\hat{V}_T^2}$ under $\bar{\alpha}=50$}
\label{fig:4.12}
\end{minipage}\hfill
\begin{minipage}{.3\textwidth}
\centering
\includegraphics[width=\linewidth]{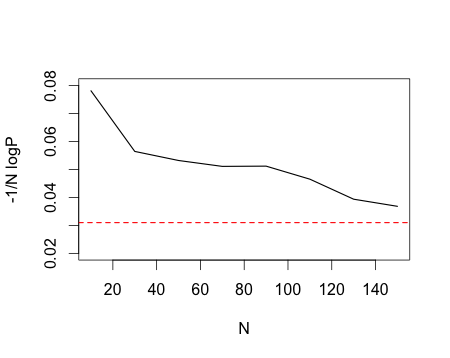}
\caption{Comparison between $-\frac{1}{N} \log P(A)$ and $ \frac{\eta^{2}}{2\hat{V}_T^2}$ under $\bar{\alpha}=100$}
\label{fig:4.13}
\end{minipage}
\end{figure}

From graphs \ref{fig:4.11}, \ref{fig:4.12} and \ref{fig:4.13}, we see that even with a moderate size of $N$, the large deviations approximation of $-\frac{1}{N}\log P(A)$ is reasonably close to $ \frac{\eta^{2}}{2\hat{V}_T^2}$.

\section{Conclusions and future work}\label{S:Conclusion}
This note mainly analyzes the effect of different mean-reversion rates $\alpha_i$'s and volatilities $\sigma_i$'s on systemic risk and flocking behavior. In the model, we use a mean-reversion part to represent the effect of the system to a specific diffusion process and a random part for individual activity. We consider different combinations of ($\alpha_i$,$\sigma_i$) to stretch out heterogeneity effects. Based on our model, we explore numerically the properties of the tail default  probability and compute the exact form of it for the mean behavior. In the specific case where all the $\alpha_i$'s are the same, but the  $\sigma_i$'s are different, we find that $\frac{1}{N}\sum_{i=1}^N \sigma_i^2$ influences the systemic risk positively, but the "flocking effect" resulting from $\alpha_i$'s will also increase the system risk.

In the general case that has different $\alpha_i$'s and $\sigma_i$'s, the situation is far richer. Systemic risk and flocking behavior are significantly affected  by the size of $\alpha_i$'s and $\sigma_i$'s and on how they are combined, see Observations 1-9 in Section 3.1. We also find  that when the majority of agents are of moderate size, in that the corresponding parameters for the coefficients $\alpha_i$ and $\sigma_i$ are neither too large nor too small, the system is more stable. In this case there is a flocking behavior and the tail probability of many agents defaulting is not large.

It is clear that this paper raises open questions. We list some of them below.
\begin{enumerate}
\item{It is clear that the brute force Monte Carlo approximation to $-\frac{1}{N}\log P(A)$ is quite inefficient. As $P(A)$ is a rare event, one would like to develop related accelerated Monte Carlo methods, such as importance sampling.}
\item{Flocking effect and stabilization of the system are issues that need to be better understood and better quantified. Can one quantify, in even more explicit terms than we did in this paper, what is the effect of the interconnections  of the system that the agents constitute on flocking behavior? This paper is a first study towards this direction.}
\item{The model that we studied in this paper is a stylistic model and our goal is to illustrate some of the issues. Analysis of more general models such as the ones appearing in \cite{FouqueIchiba2013,GarnierPapanicoalouYang2013,GieseckeSpiliopoulosSowers2013,SpiliopoulosSowers2015} are of interest.}
\end{enumerate}
We plan to study these questions in future works.

\section{Acknowledgement}
This work was partially supported by the National Science Foundation (NSF) CAREER award DMS 1550918. 

\appendix
\section*{Appendix}
In the appendix we prove Lemma \ref{L:Main}. In order to obtain the expansion of $V_T^2$ with respect to $\delta$, we first obtain an expansion of $\varrho^{T}e^{Mt}$. Based on the form of M, we rewrite it as $M=-\bar{\alpha}\bar{M}-\delta\bar{\alpha}N$. Here $\bar{M}=I-\mu\varrho^{T}$ where $\mu$ is the K-dimensional column vector with each component 1 and $N=c_i(\delta_{ij}-\rho_j), i,j=1,...,K$.
Therefore
\begin{equation*}
e^{Mt}=\sum_{n=0}^{\infty} \frac{(-\bar{\alpha}t)^n}{n!}(\bar{M}+\delta N)^n
\end{equation*}

We focus on Taylor expansion with respect to $\delta\ll 1$. Note the five facts $\sum\limits_{i=1}^K \rho_i=1$, $\bar{M}^n=\bar{M}$, $\bar{M}^T\varrho=0\cdot \mu$, $\bar{M}^TN^T\varrho=N^T\varrho$ and $\bar{M}^T(N^T)^2\varrho=(N^T)^2\varrho$. Below we first present the coefficients of the Taylor expansion of $V_T^2$ with respect to $\delta$ up to second order.

For the zeroth order, i.e. for $O(\delta^0)=O(1)$, the coefficient is:
\begin{equation*}
\varrho^T \sum_{n=0}^\infty\frac{\left(-\bar{\alpha}t\right)^n}{n!}\bar{M}^n
=\sum_{n=0}^\infty\frac{\left(-\bar{\alpha}t\right)^n}{n!} \varrho^T\bar{M}
=\varrho^{T}
\end{equation*}

For the first order, i.e, for  $O(\delta^1)$, the coefficient is:\\
\begin{equation*}
\begin{split}
\varrho^T \sum_{n=1}^\infty\frac{\left(-\bar{\alpha}t\right)^n}{n!}\delta N \bar{M}^{n-1}
&= \sum_{n=1}^\infty\frac{\left(-\bar{\alpha}t\right)^n}{n!}\delta \varrho^T N \bar{M}^{n-1} \\
&=\delta \left[ \frac{-\bar{\alpha}t}{1!} \varrho^{T} N+(e^{-\bar{\alpha}t}-1+\frac{\bar{\alpha}t}{1!}) \varrho^{T} N\bar{M} \right] \\
&=\delta(e^{-\bar{\alpha}t}-1)\varrho^{T}N
\end{split}
\end{equation*}

For the second order, i.e., for  $O(\delta^2)$, the coefficient is:
\begin{equation*}
\varrho^T \left\{ \frac{(-\bar{\alpha}t)^2}{2!}(\delta N)^2+ \sum_{n=3}^\infty  \frac{\left(-\bar{\alpha}t\right)^n}{n!}\left[ \delta^{2} N \bar{M}^{n-2} N  + (\delta N)^2 \bar{M}^{n-2}+ (n-3)\delta^{2} (N \bar{M})^2 \right]  \right\}
\end{equation*}
\begin{equation*}
\begin{aligned}
&=\varrho^T \left\{ \frac{(-\bar{\alpha}t)^2}{2!}(\delta N)^2+ \sum_{n=3}^\infty \frac{\left(-\bar{\alpha}t\right)^n}{n!}\left[ \delta^{2} N^{2}  + (\delta N)^2+ (n-3)\delta^{2} N^2 \right]  \right\}\\
&=\varrho^T \delta^{2} \left[ \sum_{n=2}^\infty \frac{\left(-\bar{\alpha}t\right)^{n}}{n!}(n-1)N^{2} \right] \\
&= \varrho^T \delta^{2} \left[ (-\bar{\alpha}t)\sum_{n=2}^\infty \frac{\left(-\bar{\alpha}t\right)^{n-1}}{(n-1)!}N^{2} - \sum_{n=2}^\infty \frac{\left(-\bar{\alpha}t\right)^{n}}{n!}N^{2}\right]\\
&= \varrho^T \delta^{2} \left[ (-\bar{\alpha}t)(e^{-\bar{\alpha}t}-1)-(e^{-\bar{\alpha}t}-1+\bar{\alpha}t)\right]N^2 \\
&= \delta^{2} \varrho^T  \left(1-e^{-\bar{\alpha}t}-\bar{\alpha}t e^{-\bar{\alpha}t} \right) N^{2}
\end{aligned}
\end{equation*}
Finally, we get the result
\begin{equation*}
\varrho^{T} e^{Mt}= \varrho^{T} + \delta (e^{-\bar{\alpha}t}-1) \varrho^{T} N +\delta^{2} (1-e^{-\bar{\alpha}t}-\bar{\alpha}te^{-\bar{\alpha}t})\varrho^{T} N^{2} + O(\delta^{3})
\end{equation*}
Therefore, we have
\begin{equation*}
\begin{split}
\varrho^{T} e^{Mt} R^{-1} (\varrho^{T} e^{Mt})^{T} = \varrho^{T} R^{-1} \varrho + \delta(e^{-\bar{\alpha}t}-1)\left( \varrho^{T}R^{-1}N^{T}\varrho+\varrho^{T}NR^{-1}\varrho\right) \\
+\delta^2(1-e^{-\bar{\alpha}t}-\bar{\alpha}te^{-\bar{\alpha}t})(\varrho^{T}R^{-1}(N^{T})^{2}\varrho+\varrho^{T}N^{2}R^{-1}\varrho)\\ +\delta^{2}(e^{-\bar{\alpha}t}-1)^{2}\varrho^{T}NR^{-1}N^{T}\varrho+O(\delta^{3}) \footnotemark[3]
\end{split}
\end{equation*}
\footnotetext[3]{Note that we omit $O(\delta^{3})$ terms in this equation.}
Then we can simplify as follows
\begin{equation*}
\begin{split}
\varrho^{T} e^{Mt} R^{-1} (\varrho^{T} e^{Mt})^{T} = \varrho^{T} R^{-1} \varrho + 2\delta(e^{-\bar{\alpha}t}-1)\varrho^{T}NR^{-1}\varrho \\
+2\delta^2(1-e^{-\bar{\alpha}t}-\bar{\alpha}te^{-\bar{\alpha}t})\varrho^{T}N^{2}R^{-1}\varrho\\ +\delta^{2}(e^{-\bar{\alpha}t}-1)^{2}\varrho^{T}NR^{-1}N^{T}\varrho+O(\delta^{3})
\end{split}
\end{equation*}
Noting now the two important facts that $(\varrho^{T}N)_{1j} = \rho_j c_j-\rho_j\sum\limits_{i=1}^{K} \rho_i c_i$ and $(NR^{-1}\varrho)_{j1}= c_j\sigma_j^{2}-c_j\sum\limits_{i=1}^K\rho_i\sigma_i^2$, $j=1,...,K$, we finally obtain:
\begin{equation*}
\varrho^{T}R^{-1}\varrho = \sum_{i=1}^K\rho_i\sigma_i^{2}
\end{equation*}

\begin{equation*}
\varrho^{T}NR^{-1}\varrho= \sum_{i=1}^K\rho_i c_i\sigma_i^{2}-\sum_{i=1}^K\rho_i c_i \sum_{i=1}^K\rho_i\sigma_i^{2}
\end{equation*}

\begin{small}
\begin{equation*}
\begin{split}
\varrho^{T}N^{2}R^{-1}\varrho= \sum_{i=1}^K \rho_i c_i^{2}\sigma_i^{2}-\sum_{i=1}^K\rho_i\sigma_i^{2}\sum_{i=1}^K\rho_i c_i^{2}-\sum_{i=1}^K \rho_i c_i \sum_{i=1}^K\rho_i c_i \sigma_i^{2}+\sum_{i=1}^K\rho_i\sigma_i^{2} \left(\sum_{i=1}^K \rho_i c_i \right)^2
\end{split}
\end{equation*}
\end{small}
\begin{equation*}
\begin{split}
\varrho^{T}NR^{-1}N^{T}\varrho= \sum_{i=1}^K\rho_i c_i^{2}\sigma_i^{2}-2\sum_{i=1}^K\rho_i c_i\sum_{i=1}^K\rho_i c_i\sigma_i^{2}+\sum_{i=1}^K\rho_i\sigma_i^{2} \left(\sum_{i=1}^K\rho_i c_i \right)^2
\end{split}
\end{equation*}
Now, we set
\begin{equation*}
A= \sum_{i=1}^K\rho_i\sigma_i^{2}
\end{equation*}

\begin{equation*}
B=2\delta\left(\sum_{i=1}^K\rho_i c_i\sigma_i^{2}-\sum_{i=1}^K\rho_i c_i \sum_{i=1}^K\rho_i\sigma_i^{2} \right)
\end{equation*}

\begin{small}
\begin{equation*}
C=2\delta^{2} \left[ \sum_{i=1}^K \rho_i c_i^{2}\sigma_i^{2}-\sum_{i=1}^K\rho_i\sigma_i^{2}\sum_{i=1}^K\rho_i c_i^{2}-\sum_{i=1}^K \rho_i c_i \sum_{i=1}^K\rho_i c_i \sigma_i^{2}+\sum_{i=1}^K\rho_i\sigma_i^{2} \left(\sum_{i=1}^K \rho_i c_i \right)^2 \right]
\end{equation*}
\end{small}

\begin{equation*}
D=\delta^{2} \left[\begin{split}
\sum_{i=1}^K\rho_i c_i^{2}\sigma_i^{2}-2\sum_{i=1}^K\rho_i c_i\sum_{i=1}^K\rho_i c_i\sigma_i^{2}+\sum_{i=1}^K\rho_i\sigma_i^{2} \left(\sum_{i=1}^K\rho_i c_i \right)^2
\end{split} \right]
\end{equation*}
Consequently, we get
\begin{equation*}
\int_{0}^{T} \varrho^{T} e^{Ms} R^{-1} (\varrho^{T} e^{Ms})^{T} ds
\end{equation*}
\begin{equation*}
=\int_{0}^{T}\left[A+B(e^{-\bar{\alpha}s}-1)+C(1-e^{-\bar{\alpha}s}-\bar{\alpha}se^{-\bar{\alpha}s})
+D(e^{-2\bar{\alpha}s}+1-2e^{-\bar{\alpha}s})\right] ds
\end{equation*}

Based on this, we get the approximation:
\begin{small}
\begin{equation*}
\begin{split}
V_T^2
&=\int_{0}^{T}\mathcal{\varrho}^Te^{Ms}R^{-1}(e^{Ms})^T\mathcal{\varrho}ds \\
& \approx \left[\frac{1}{2\bar{\alpha}}(2B-4C-3D)+(A-B+C+D)T
+(\frac{2C}{\bar{\alpha}}+CT-\frac{B}{\bar{\alpha}}+\frac{2D}{\bar{\alpha}})e^{-\bar{\alpha}T}-\frac{D}{2\bar{\alpha}}e^{-2\bar{\alpha}T} \right] \\
&=\hat{V}_{T}^2(\delta)
\end{split}
\end{equation*}
\end{small}

Then, with
\begin{equation*}
\alpha_k = \bar{\alpha}(1+\delta c_k)
\end{equation*}
we find that
\begin{equation*}
\bar{\alpha} = \sum_{i=1}^{K} \rho_i\alpha_i = \sum_{i=1}^{K}\rho_i\bar{\alpha}(1+\delta c_i) = \bar{\alpha} + \delta \bar{\alpha} \sum_{i=1}^K \rho_i c_i
\end{equation*}
where $\bar{\alpha}>0, \delta>0$. Equivalently, we have obtained
$\sum_{i=1}^K \rho_i c_i =0$.
 Plugging the latter expression in A,B,C,D and using $\sum_{i=1}^K \rho_i c_i =0$, we conclude the proof of the lemma.
%

\bibliographystyle{apalike}

\end{document}